\patchcmd\Gread@eps{\@inputcheck#1 }{\@inputcheck"#1"\relax}{}{}
\begin{document}
\title{Multi-Party Proof Generation in QAP-based zk-SNARKs}

\author{Ali~Rahimi, Mohammad~Ali~Maddah-Ali%
	\\{Department of Electrical Engineering, Sharif University of Technology}%
}

\maketitle

\begin{abstract} 
Zero-knowledge succinct non-interactive argument of knowledge (zkSNARK)  allows a party, known as the prover,  to convince another party, known as the  verifier,  that he knows a private value $v$, without revealing it, such that $F(u,v)=y$ for some function $F$ and public values $u$ and $y$. There are various versions of zk-SNARK,  among them, Quadratic Arithmetic Program (QAP)-based  zk-SNARK has been widely used in practice, specially in Blockchain technology. This is attributed to two desirable features; its fixed-size proof and the very light computation load of the verifier. However, the computation load of the prover in QAP-based zkSNARKs, is very heavy, even-though it is designed to be  very efficient. This load can be beyond the prover's computation power to handle, and has to be offloaded to some external servers. In the existing offloading solutions, either (i)  the load of computation, offloaded to each sever, is a fraction of the prover's primary computation (e.g.,  DZIK), however the servers need to be trusted, (ii)  the servers are not required to be trusted, but the computation complexity imposed to each one is the same as the prover's primary computation (e.g.,  Trinocchio). In this paper, we present a scheme, which has the benefits of both solutions. In particular, we propose a secure multi-party proof generation algorithm where the prover can delegate its task to $N $ servers, where (i) even if a group of $T \in \mathbb{N}$ servers, $T\le N$,  collude, they cannot gain any information about the secret value $v$, (ii) the computation complexity of each server is less than $1/(N-T)$ of the prover's primary computation.  The design is such that we don't lose the efficiency of the prover's algorithm in the process of delegating the tasks to external servers. 
%
\end{abstract}


\section{Introduction}
\label{sec:introduction}
Zero knowledge proofs are powerful cryptographic tools by which a person (prover) can convince another person (verifier) that an assertion about some secret information is true, without revealing the secret information itself or revealing any other information beyond the fact that the assertion is true. Consider, for example, the assertion that "I know a secret number that is quadratic non-residue mod $m$''. The  prover can convince the verifier that he knows such a number without giving additional knowledge to the verifier. \par
The zero knowledge proofs were initially introduced in 1980s, in \cite{goldwasser1989knowledge} and \cite{babai1985trading}. After that, a lot of efforts have been dedicated to make it more efficient. For example, Kilian \cite{kilian1992note} has introduced the \emph{succinct} interactive zero knowledge proof system, where the load of communication between the prover and the verifier can be less than the size of the corresponding arithmetic circuit, a diagram which representing the original computation. Micali \cite{micali1994cs} has developed zero-knowledge succinct \emph{non-interactive} argument of knowledge (zkSNARK), which is a type of zero knowledge proof systems in which the prover just sends one message to the verifier. For a survey about zero knowledge proof systems see \cite{li2014survey} and \cite{mohr2007survey}. \par

zkSNARKs have been extensively explored in the literature \cite{micali1994cs, setty2013resolving, bitansky2013succinct, gabizon2019plonk, giacomelli2016zkboo, maller2019sonic, setty2019spartan, bowe2019halo, wahby2018doubly, bootle2018arya, ben2017scalable, xie2019libra, ames2017ligero, wahby2017full, bunz2018bulletproofs, ben2018scalable, ben2019aurora}. Also zkSNARKs have been used in many applications, for example in authentication systems \cite{kurmi2015survey}, construction of various types of cryptographic protocols \cite{groth2017snarky}, privacy preserving crypto-currencies \cite{sasson2014zerocash}, smart contracts \cite{kosba2016hawk, juels2016ring, li2020phantom}, miscellaneous applications in Blockchain technology\cite{xu2017enabling, lu2018zebralancer, kerber2019ouroboros, garoffolo2020zendoo}, verifiable outsourcing of the computation \cite{shan2018practical, yu2017survey}, and many other areas \cite{destefano2020snnzksnark}. In addition, various toolboxes have been created to implement zkSNARKs, e.g., \cite{libsnark, bellman, websnark, eberhardt2018zokrates}. \par

In \cite{setty2019spartan}, an efficient zkSNARK is introduced which is based on a specific encoding of the arithmetic circuit to a  polynomial form called quadratic arithmetic program (QAP). QAP-based zkSNARKs have several properties that make them attractive in practice. In particular, the  size of the proof is always \emph{constant}, e.g., 8 elements in \cite{parno2013pinocchio}, and 3 elements in \cite{groth2016size} regardless of the size of  the arithmetic circuit. In addition, the verifier's work (the number of addition and multiplication operations) is of the order $O(| \mathcal{I}_{io} |)$, where $| \mathcal{I}_{io} |$ is the aggregated number of public inputs and outputs of the arithmetic circuit (see section \ref{sec:Background}). However, the prover's work in QAP-based zkSNARKs is dominated with $O(n \, \log(n))$ computation cost, where $n$ is the number of multiplication gates in the arithmetic circuit. When the arithmetic circuit is large, e.g., with more than several billions of gates, a prover with a limited computing resource cannot generate the proof himself, and he may need to offload this computation to some other servers. \par

Several papers have investigated the problem of outsourcing the task of generating proof. 
\begin{itemize}
\item \textbf{Nectar \cite{covaci2018nectar}}: Nectar which is a smart contract protocol, uses zkSNARKs for verification of the correct execution of the smart contracts. In Nectar, the prover delegates his task to a powerful  trusted worker, and sends all of its secret inputs to that worker to generate the proof. The disadvantage of this system is the need to trust the external server. In addition, that server needs to be powerful enough to handle the computation. 
\item \textbf{DIZK \cite{wu2018dizk}}: This work proposes an algorithm for delegation of the prover task to several trusted machines by using a Map-Reduce framework. The advantage is that each server is responsible to execute part of the prover task, which is assigned to it based on its processing and storage resource. However, if some of the servers are untrusted, DIZK framework cannot be used.
\item \textbf{SPARKs \cite{ephraim2020sparks}}: This work breaks the computation task into a sequence of smaller sub-tasks, and delegates the proof generation for the correctness of each sub-task to one server. The weakness is that the servers must be trusted. On top of that, SPARKs deal with the computation task in detail to be able to split it into sub-tasks.
\item \textbf{Trinocchio \cite{schoenmakers2016trinocchio}}: In Trinocchio, we run BGW multi party computation (MPC) \cite{ben2019completeness} to offload the computation to several servers. The disadvantage is that the load of the task assigned to each server is the same as that of the original computation. 
\end{itemize}

To offload computation task on private data to some external untrusted nodes, a fundamental approach is multi party computation.  The multi party computation was initially introduced in the early 1980s by Yao \cite{yao1982protocols}, and has been followed in many works \cite{yao1986generate, goldreich2019play, ben2019completeness, crepeau1995committed, yu2018lagrange }. For a survey about multi party computation see \cite{evans2017pragmatic}. Multi party computation has been used in many areas such as machine learning \cite{chen2019secure}, secure voting \cite{gang2008electronic}, securing database \cite{archer2018keys} and Blockchain technology \cite{gao2019bfr}.

In this paper, our objective is to design an offloading mechanism that has two properties: 
\begin{enumerate}
\item The load of computation per server is a faction of the full load of generating proofs. The reason is clear: the load of computation for prover task is beyond what one server can affords. 

\item The servers are not required to be trusted. In particular, we assume a subset of size $T \in \mathbb{N}$ of the servers may collude to gain information about the input of the prover.
\end{enumerate}

Current solutions have only one of the above properties. In particular, Trinocchio \cite{schoenmakers2016trinocchio} works even if some of the servers are curious and colludy. However, the load of computation per server is the same or even more than the load of prover task. 
In DIZK~\cite{wu2018dizk}, on the other hand, the load of computation per server can be a fraction of the load of the prover task. However, it does assume that all the servers are trusted. Our approach is based on ideas from multiparty computation.  However, we cannot use an off-the-shelf multiparty computation scheme and apply it to the computation task of the prover. The reason is that the prover task has been hand designed to be very efficient. If we apply an MPC scheme blindly, we lose the efficiency in computation, and the computation task of each server becomes even more that the original computation. 
In this paper, we design a multi-party scheme for the prover task that works with $N$ servers, such that, even if $T$ of them collude for some $T < N$, they gain no information about the secret inputs (the second property). In addition, the computation load of each server is $\frac{n}{N-T} \log \left(\frac{n}{N-T}\right)$ (the first property).

%

The rest of the paper is organized as follows. In the section \ref{sec:Background}, we review some of the backgrounds including QAP-based zkSNARKs. In the section \ref{Preliminaries2}, we review fast Fourier transform ($\mathscr{FFT}$) and some secret sharing schemes. In the section \ref{sec:Proposed-Scheme}, we explain the main challenge and then we detail the proposed scheme. Section \ref{Discussion and conclusion} is dedicated to discussion and conclusion.

\textbf{Notation.} We denote vectors by lowercase bold letters such as $\mathbf{x}$. The notation $\mathbf{x} = [x_i]_{i=1}^{n}$ means $\mathbf{x}$ is a vector of length $n$ and $x_i$ is its $i$th coordinate. There are some vectors in this paper that has a large mathematical symbol, representing their construction path. To show the $i$th coordinate of these vectors we use the notation $(.)_{i}$. For example $\left ( \mathscr{FFT}_{\mathcal{S'}}^{-1} \left (\mathbf{u} \left (\alpha_\theta \right) \right) \right )_{i}$ is the $i$th coordinate of the vector $\mathscr{FFT}_{\mathcal{S'}}^{-1}(\mathbf{u}(\alpha_\theta))$.

We denote matrices by bold uppercase letters, e.g. $\mathbf{X}$. We denote sets by uppercase calligraphy letters and use $\{.\}$ to show the elements, for example in $\mathcal{X} = \{x_1,x_2,x_3\}$, $\mathcal{X}$ is a set containing elements $x_1$, $x_2$, and $x_3$. We use double bracket $\llbracket .\rrbracket$ for encryption. For more detail see \emph{Cryptographic operations} in Section \ref{preliminaries}.

\section{Background on QAP-based zkSNARK}
\label{sec:Background}
\subsection{The story of zkSNARK}
Suppose that there is a globally known function $y=F(u,v)$ consisting of only multiplication and addition operations. The prover is a person that has calculated this function with inputs $u$ and $v$, and has obtained $y$. The input $u$ and the output $y$ are publicly available. The prover wants to convince the verifier that he know $v$ such that $y=F(u,v)$. However, one of the main constraints is that $v$ is a private parameter, and the prover doesn’t want to reveal it to the verifier. The second constraint is that the verifier wants to verify this computation with negligible load of computation and communication.

A non-interactive zero-knowledge proof system allows the prover to make a string $\pi$, called \emph{proof}, that if the prover sends it along with public input $u$ and output $y$ to the verifier, the verifier will be convinced that the prover knows a $v$ as the input of $F$ such that the calculation of $y=F(u,v)$ has been done correctly without obtaining any other information about $v$. 

In order to generate and verify proofs, a process must be performed in advance based on the structure of function $F$. This process is called the \emph{setup phase}. A third party, often called as the trusted party, runs the setup phase and generates two public parameters \emph{Evaluation Key} ($\mathcal{EK}$) and \emph{Verification Key} ($\mathcal{VK}$). $\mathcal{EK}$ and $\mathcal{VK}$ depend on the structure of the function $F$, and are independent of $y$, $u$, $v$. Then the prover generates proof $\pi$ using $\mathcal{EK}$ and the result of the calculation $y = F(u,v)$. The verifier verifies the prover claim using $\mathcal{VK}$ and the proof $\pi$. The size of the proof and the computation load of verifying should be negligible. 

We note that the setup phase is a one-time process. In other words, $\mathcal{EK}$ and $\mathcal{VK}$ can be used many times as long as the function $F$ remains the same. As a result, the computation cost of the setup phase amortizes over many zkSNARK sessions about $F$ by different provers. It is worth mention that in the setup phase, the trusted party uses some intermediate parameters to develop $\mathcal{EK}$ and $\mathcal{VK}$. These parameters are used only once, and must be deleted after that; otherwise if someone has access to these parameters can cheat and generate counterfeit proofs (See Fig. \ref{snark_components}). 


As an example, in the Zcash Blockchain which uses zkSNARK to support anonymous transactions, the setup phase had been run before the network started, and two sets $\mathcal{EK}$ and $\mathcal{VK}$ made available to everyone. Whenever someone wants to make an anonymous transaction, he should generate a proof using $\mathcal{EK}$ (to prove he has enough money, etc.), then the miners in the network verify the proof using $\mathcal{VK}$. For more details see \cite{sasson2014zerocash}.

zkSNARKs have some properties that are informally mentioned below:
\begin{itemize}
\item \textbf{Zero-knowledge}: Verifier obtains no information about $v$ beyond the fact that $y=F(u,v)$.
\item \textbf{Succinctness}: Size of the proof $\pi$ is constant, no matter the size of $F$. This feature makes zkSNARK a good tool in practice (e.g. cloud computing).
\item \textbf{Non-interactive}: Verifier doesn't send anything to the prover.
\item \textbf{Publicly verifiable}: This property allows many people to check the proof and it is useful for applications such as Blockchain.
\item \textbf{Correctness}: If zkSNARK is executed honestly and a proof is generated honest, verifier(s) will always detect it correctly.
\item \textbf{Knowledge soundness}: Polynomial-time adversary who doesn't know some $v$ that holds in $y = F(u,v)$, can not generate a valid proof.
\end{itemize}

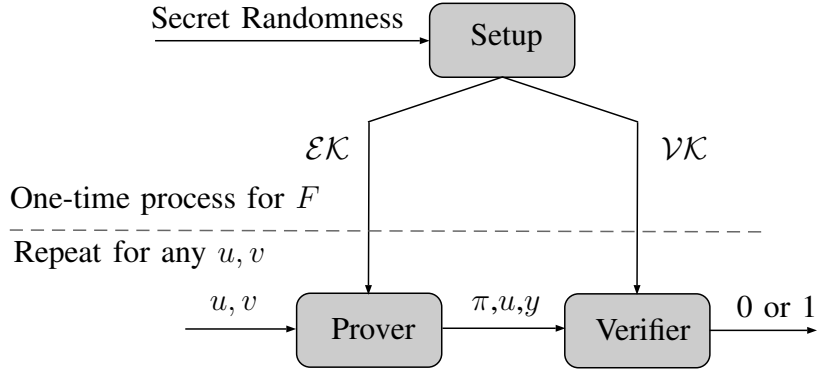
\begin{figure}[!htbp]
	\centering
%
\psscalebox{1.0 1.0} 
{
\begin{pspicture}(0,-2.4341586)(10.666204,2.4341586)
\definecolor{colour0}{rgb}{0.8,0.8,0.8}
\definecolor{colour1}{rgb}{0.4,0.4,0.4}
\psframe[linecolor=black, linewidth=0.02, fillstyle=solid,fillcolor=colour0, dimen=outer, framearc=0.3701811](7.5209184,2.4341586)(5.571712,1.430984)
\psframe[linecolor=black, linewidth=0.02, fillstyle=solid,fillcolor=colour0, dimen=outer, framearc=0.3701811](5.752664,-1.4231112)(3.8034577,-2.4262857)
\psline[linecolor=black, linewidth=0.02](6.5535297,1.4453795)(4.76153,0.8453795)
\psline[linecolor=black, linewidth=0.02](6.54553,1.4453795)(8.35353,0.8453795)
\rput[bl](3.9311967,0.35604623){$\mathcal{EK}$}
\rput[bl](6.126315,1.7825713){Setup}
\rput[bl](1.8777628,2.0931256){Secret Randomness}
\rput[bl](4.278061,-2.0446985){Prover}
\psframe[linecolor=black, linewidth=0.02, fillstyle=solid,fillcolor=colour0, dimen=outer, framearc=0.3701811](9.317743,-1.4309841)(7.368537,-2.4341588)
\rput[bl](7.788141,-2.0605712){Verifier}
\rput[bl](8.660197,0.35604623){$\mathcal{VK}$}
\psline[linecolor=black, linewidth=0.02, arrowsize=0.05291667cm 2.0,arrowlength=1.4,arrowinset=0.0]{->}(4.776123,0.84122247)(4.768123,-1.4547775)
\psline[linecolor=colour1, linewidth=0.02, linestyle=dashed, dash=0.17638889cm 0.10583334cm](0.0,-0.59199375)(9.989363,-0.6159938)
\rput[bl](0.0036502185,-0.39436337){One-time process for $F$}
\rput[bl](0.051782332,-1.111503){Repeat for any $u,v$}
\rput[bl](6.1392603,-1.7423352){$\pi$,$u$,$y$}
\psline[linecolor=black, linewidth=0.02, arrowsize=0.05291667cm 2.0,arrowlength=1.4,arrowinset=0.0]{->}(2.3331509,-1.9083599)(3.8066325,-1.9083599)
\rput[bl](2.6584842,-1.7423352){$u,v$}
\psline[linecolor=black, linewidth=0.02, arrowsize=0.05291667cm 2.0,arrowlength=1.4,arrowinset=0.0]{->}(9.303987,-1.9268149)(10.731246,-1.9268149)
\rput[bl](9.664114,-1.7523352){0 or 1}
\psline[linecolor=black, linewidth=0.02, arrowsize=0.05291667cm 2.0,arrowlength=1.4,arrowinset=0.0]{->}(1.9317541,1.9268148)(5.5812354,1.9268148)
\psline[linecolor=black, linewidth=0.02, arrowsize=0.05291667cm 2.0,arrowlength=1.4,arrowinset=0.0]{->}(5.7521987,-1.898836)(7.3875847,-1.898836)
\psline[linecolor=black, linewidth=0.02, arrowsize=0.05291667cm 2.0,arrowlength=1.4,arrowinset=0.0]{->}(8.346682,0.8412227)(8.338682,-1.4547774)
\end{pspicture}
}
	\caption{Components of zkSNARK system ~\cite{wu2018dizk}. Note that $F$ and its corresponding Quadratic Arithmetic Program are available to everyone.}
	\label{snark_components}
\end{figure}

\subsection{Main components of zkSNARK}
\label{preliminaries}
\begin{itemize}
\item \textbf{Arithmetic circuit}: It's a diagram, consists of wires and multiplication and addition gates, to represent the function $F$ and its intermediate calculations. For example, Fig. \ref{arithmetic_circuit_a} corresponding to the function $(c_1+c_2)c_3^2$. As it turns out, this diagram represents not only the function, but also represents the process of computation. For a computation to be correct, every computation in this graph must be correct. As you will see later, this structure would allow us to develop the proof of correct execution of the function $F$. The arithmetic circuit that corresponds to a function is not unique but a valid representation is enough for us. 


\item \textbf{Equivalent Quadratic Arithmetic Program (QAP)}: In this step, we represent the structure of the arithmetic circuit using some polynomials.  Suppose that the arithmetic circuit of $F$, has $n$ multiplication gates. We assume  $n$ is a power of 2. If it is not the case, we add some operations to the arithmetic circuit to make $n$ become a power of 2. To develop the corresponding polynomials we need to label the multiplication gates and wires of the circuit. Let $\omega$ be a primitive $n$th root of unity in $\mathbb{F}$, i.e., $\omega^n = 1$. We label the multiplication gates of the arithmetic circuit by the set $\mathcal{S} = \left \{1,\dots,\omega^{n-1} \right \}$ in an arbitrary order. 

Now consider the wires that are input of the arithmetic circuit and the wires that are output of the multiplication gates. We index them in an arbitrary order by the set $\{1,\dots,m\}$. As a convention, there is a wire in the arithmetic circuit that always carries 1. We assign index $i = 0$ to that wire as well.  As you can see, we neither  label the addition gates nor index their output  wires. Later, we will explain how to treat those.


Now we are ready to represent the structure of the arithmetic circuit in polynomials. For each wire, indexed by $i$, $i \in \{0,\dots,m\}$, we define three polynomials of the degree at most $n-1$, denoted by $L_{i}(x)$, $R_{i}(x)$ and $O_{i}(x)$ as follows: 

$L_{i}(x) = 
\begin{cases}
    1,& \text{if } x = \text{label of a gate, and the } i \text{th wire} \text{ is the left input of that gate}\\
    0,& \text{if } x = \text{label of a gate, and the } i \text{th wire} \text{ isn't the left input of that gate}\\
    \sim,& \text{otherwise}
\end{cases}$

The symbol $\sim$ means the value of the polynomial does not matter at this point.

$R_{i}(x) = 
\begin{cases}
    1,& \text{if } x = \text{label of a gate, and the } i \text{th wire} \text{ is the right input of that gate}\\
    0,& \text{if } x = \text{label of a gate, and the } i \text{th wire} \text{ isn't the right input of that gate}\\
    \sim,& \text{otherwise}
\end{cases}$

$O_{i}(x) = 
\begin{cases}
    1,& \text{if } x = \text{label of a gate, and the } i \text{th wire} \text{ is the output of that gate}\\
    0,& \text{if } x = \text{label of a gate, and the } i \text{th wire} \text{ isn't the output of that gate}\\
    \sim,& \text{otherwise}
\end{cases}$

These polynomials can be developed simply by Lagrange interpolation.  

Recall that we don't label the output wires of the addition gates. To cover the addition gates and their outputs, in the above definitions, we extend the notation of being a \emph{right input} or \emph{left input} of a multiplication gate as follows:  If an indexed wire goes through one or more addition gates, and eventually becomes the right (left) input of a multiplication gate, then we also consider that indexed wire as a right (left) input of that multiplication gate. By going through an addition gate, we mean it is an input of that addition gate. For example in the Fig. \ref{arithmetic_circuit}, we say wire 1 and wire 2 both are the left inputs of multiplication gate.

We also define polynomial, $T(x) \triangleq x^n - 1$, called \emph{target polynomial}, which is of the degree $n$, and is divisible by the label of each multiplication gate. In other words,  if $x \in \mathcal{S}$, we have $x = \omega^j$ for some $0\leq j < n$, so $T(x) = T(\omega^j) = \omega^{jn} - 1 = 0$.

A QAP $\mathcal{Q} \triangleq \left \{T(x), \left \{L_{i}(x)\right \}, \left \{R_{i}(x)\right \}, \left \{O_{i}(x)\right \}\right \}$ over the finite field $\mathbb{F}$ is the set of target polynomial $T(x)$ and three sets of $m+1$ polynomials. We note that QAP of an arithmetic circuit  completely describes the structure of that circuit.

\begin{figure}
     \centering
     \begin{subfigure}[b]{0.5\textwidth}
         \centering
%
\psscalebox{1.0 1.0} 
{
\begin{pspicture}(0,-2.1466668)(5.464805,2.1466668)
\definecolor{colour0}{rgb}{0.8,0.8,0.8}
\pspolygon[linecolor=black, linewidth=0.01, fillstyle=solid,fillcolor=colour0](5.4597588,1.0285784)(4.7166476,1.0285784)(4.4464254,0.83302283)(4.4464254,0.62413394)(4.6997585,0.41968948)(5.454129,0.42413393)
\pscircle[linecolor=black, linewidth=0.01, dimen=outer, doubleline=true, doublesep=0.02](0.8139349,-0.10812661){0.08888889}
\pspolygon[linecolor=black, linewidth=0.01, fillstyle=solid,fillcolor=colour0](0.005046041,0.19409561)(0.74815714,0.19409561)(1.0183793,-0.0014599424)(1.0183793,-0.21034883)(0.76504606,-0.41479328)(0.010675671,-0.41034883)
\rput[bl](1.7234287,1.0333334){wire 4}
\psellipse[linecolor=black, linewidth=0.02, dimen=outer](2.337036,-0.10565346)(0.1904762,0.1904762)
\psline[linecolor=black, linewidth=0.02](2.2536385,-0.033066727)(2.420434,-0.1998621)
\psline[linecolor=black, linewidth=0.02](2.2536383,-0.199862)(2.4204338,-0.033066574)
\psellipse[linecolor=black, linewidth=0.02, dimen=outer](1.7222613,-0.9400679)(0.1904762,0.1904762)
\psline[linecolor=black, linewidth=0.02](1.7222614,-0.8221257)(1.7222614,-1.05801)
\psline[linecolor=black, linewidth=0.02](1.6043192,-0.94006765)(1.8402035,-0.94006765)
\rput[bl](0.22559094,-1.3693243){wire 1}
\rput[bl](1.3488549,0.30033088){wire 3}
\rput[bl](2.1493971,-1.3693243){wire 2}
\psline[linecolor=black, linewidth=0.02, arrowsize=0.05291667cm 2.0,arrowlength=1.4,arrowinset=0.0]{->}(1.1538793,-1.6249373)(1.6212666,-1.1024147)
\psline[linecolor=black, linewidth=0.02, arrowsize=0.05291667cm 2.0,arrowlength=1.4,arrowinset=0.0]{->}(1.7509964,-0.75207204)(2.2050505,-0.25081077)
\psline[linecolor=black, linewidth=0.02, arrowsize=0.05291667cm 2.0,arrowlength=1.4,arrowinset=0.0]{->}(4.156762,-1.6)(2.4702756,-0.24324322)
\psline[linecolor=black, linewidth=0.02, arrowsize=0.05291667cm 2.0,arrowlength=1.4,arrowinset=0.0]{->}(2.9302793,0.9027028)(2.9299116,1.4688288)
\psline[linecolor=black, linewidth=0.02, arrowsize=0.05291667cm 2.0,arrowlength=1.4,arrowinset=0.0]{->}(2.337663,0.08792796)(2.791717,0.58918923)
\psellipse[linecolor=black, linewidth=0.02, dimen=outer](2.937036,0.7076799)(0.1904762,0.1904762)
\psline[linecolor=black, linewidth=0.02](2.8536384,0.7802666)(3.020434,0.61347127)
\psline[linecolor=black, linewidth=0.02](2.8536384,0.6134713)(3.0204337,0.78026676)
\psline[linecolor=black, linewidth=0.02, arrowsize=0.05291667cm 2.0,arrowlength=1.4,arrowinset=0.0]{->}(4.1700954,-1.6)(3.0569422,0.5567568)
\rput[bl](3.6782036,-0.515991){wire 5}
\psframe[linecolor=black, linewidth=0.02, dimen=outer](2.476762,-1.6133333)(1.9434288,-2.1466665)
\rput[bl](2.0600955,-2.0006897){$c_2$}
\psframe[linecolor=black, linewidth=0.02, dimen=outer](1.3834288,-1.6133333)(0.85009545,-2.1466665)
\rput[bl](0.9667621,-2.0006897){$c_1$}
\psframe[linecolor=black, linewidth=0.02, dimen=outer](4.4367623,-1.5999999)(3.9034288,-2.1333332)
\rput[bl](4.015095,-1.9873562){$c_3$}
\psframe[linecolor=black, linewidth=0.02, dimen=outer](3.916762,2.1466668)(1.7112448,1.4666667)
\rput[bl](1.8345988,1.6147702){$(c_1 + c_2)c_3^2$}
\psline[linecolor=black, linewidth=0.02, arrowsize=0.05291667cm 2.0,arrowlength=1.4,arrowinset=0.0]{->}(2.2346,-1.6249373)(1.7672125,-1.1024147)
\rput[bl](4.8585167,0.6475257){$\omega$}
\psline[linecolor=black, linewidth=0.01, linestyle=dashed, dash=0.17638889cm 0.10583334cm, dotsize=0.07055555cm 2.0]{*-}(4.5700955,0.7291954)(3.114923,0.72229886)
\psline[linecolor=black, linewidth=0.01, linestyle=dashed, dash=0.17638889cm 0.10583334cm, dotsize=0.0706cm 3.53]{-*}(2.1080265,-0.09839077)(0.87354374,-0.10528733)
\rput[bl](0.19768165,-0.21563216){$1$}
\end{pspicture}
}
         \caption{Arithmetic circuit}
         \label{arithmetic_circuit_a}
     \end{subfigure}
     \hfill
     \begin{subfigure}[b]{0.4\textwidth} 
         \centering
%
\psscalebox{1.0 1.0} 
{
\begin{pspicture}(0,-1.9566667)(7.124865,1.9566667)
\psline[linecolor=black, linewidth=0.02](0.0,1.6093693)(7.1248646,1.6093693)
\psline[linecolor=black, linewidth=0.02](3.859793,1.6115602)(3.8464596,-1.1359575)
\rput[bl](0.20378374,-0.5071171){$L_4(x),R_4(x),O_4(x)$}
\rput[bl](0.20378374,-1.0537838){$L_5(x),R_5(x),O_5(x)$}
\rput[bl](0.20378374,0.033423424){$L_3(x),R_3(x),O_3(x)$}
\rput[bl](0.20378374,0.5872973){$L_2(x),R_2(x),O_2(x)$}
\rput[bl](0.20378374,1.1145045){$L_1(x),R_1(x),O_1(x)$}
\rput[bl](4.3845944,1.1036937){$1,0,0$}
\rput[bl](4.3745947,0.57648647){$1,0,0$}
\rput[bl](4.3745947,0.035945944){$0,0,1$}
\rput[bl](4.3845944,-1.0512613){$0,1,0$}
\rput[bl](4.3845944,-0.5045946){$0,0,0$}
\rput[bl](4.3645945,1.7366667){$x=1$}
\rput[bl](5.9645944,1.7366667){$x=\omega$}
\rput[bl](5.9979277,-0.5045946){$0,0,1$}
\rput[bl](5.9979277,-1.0512613){$0,1,0$}
\rput[bl](5.987928,0.035945944){$1,0,0$}
\rput[bl](5.987928,0.57648647){$0,0,0$}
\rput[bl](5.9979277,1.1036937){$0,0,0$}
\rput[bl](2.2336936,-1.9566667){$T(x) = (x-1)(x-\omega)$}
\end{pspicture}
}
         \caption{QAP}
         \label{arithmetic_circuit_b}
     \end{subfigure}
        \caption{The arithmetic circuit of the function $(c_1+c_2)c_3^2$ and the corresponding QAP.}
       \label{arithmetic_circuit}
\end{figure}

\item \textbf{Polynomial Representation of the Correctness of the Operations}: Recall that the advantage of the arithmetic circuit  of a function $F$ is that it represent all the intermediate operations in calculating $F$. Let us assume that we calculate $F$, and in this process, we also calculate the value that is carried by each indexed wire $i$ , denoted by $W_i$, $i \in \{0, \ldots, m\}$. For final results to be correct, we need the calculation in each multiplication gate to be correct. To verify that, one needs to verify $n$ operations one by one, which would be very difficult.  An interesting aspect of QAP is that we can use it to represent all of these operations with \emph{one} polynomial equation, as follows. A polynomial equation can easily be verified reasonably as well as will be explained later.  

Recall that all of the polynomials $\{L_i(x)\}$, $\{R_i(x)\}$ and $\{O_i(x)\}$ are of the degree $n-1$. We define the polynomials $L(x), R(x), O(x)$ of the degree $n-1$ and $P(x)$ of the degree $2n-2$ as,
	\begin{align} 
	L(x) \triangleq \sum_{i=0}^{m}W_iL_i(x), \: \:
	R(x) \triangleq \sum_{i=0}^{m}W_iR_i(x), \: \:
	O(x) \triangleq \sum_{i=0}^{m}W_iO_i(x),
	\nonumber
	\end{align}
	\begin{align}
	P(x) \triangleq L(x)R(x) - O(x),
	\nonumber
	\end{align}
	where $W_i$ is the value of the indexed wire $i$. An important observation is as follows. Let $ \omega^j$ be the label of multiplication gate $j$. Then, one can see that  $L(\omega^j)$ is equal to (the summation of) the values of wires that are left inputs of gate $j$. Similarly, $R(\omega^j)$ and $O(\omega^j)$  are equal to (the summation of) the values of the wires that are right inputs and the output wire of gate $j$, respectively. Thus for the calculation at gate $j$ to be correct,  we need to have  $L(\omega^j)R(\omega^j)=O(\omega^j)$. As the result, if the prover wants to prove that the calculation of the entire arithmetic circuit has been done correctly, it's sufficient to show $P(x)=0$, $\forall \ x \in \mathcal{S}$. Equivalently, it is sufficient to show that the target polynomial $T(x)$ divides $P(x)$. In other words, the prover needs to show that there is a polynomial $H(x)$ of the degree at most $n-2$ that $P(x)=T(x)H(x)$. 
	
%

The main idea behind zkSNARK is that (i) the prover finds polynomial $H(x)$ and then (ii) the verifier checks the equation $P(x)=T(x)H(x)$ in a point $x = s$ chosen uniformly at random from $\mathbb{F}$.  If the equation $P(x)=T(x)H(x)$ doesn't hold, the verifier will detect it with high probability. This is because two different polynomials of degree $2n-2$ can have equal values in at most $2n-2$ different points, and assuming $\left | \mathbb{F} \right | \gg 2n-2$, the probability that $s$ be one of those $2n-2$ points is $\frac{2n-2}{\left | \mathbb{F} \right |}$ that is negligible. The important note is that the prover should not know the value of $s$, otherwise, it can introduce invalid polynomials $L(x)$, $R(x)$, and $O(x)$, such that the identity $P(x)=T(x)H(x)$ holds only for $s$. This is why some cryptographic operations are needed to verify the equation for encrypted numbers. 

\item \textbf{Cryptographic operations}: In QAP based zkSNARK we rely on elliptic curve cryptography to protect the private data and soundness of the algorithm. 
Let $\mathbb{G}_1$ be an additive group,  developed based on an elliptic curve defined over the finite field $\mathbb{F}$, and $g_1 \in \mathbb{G}_1$ be a generator of this group. To encrypt a scalar $a \in \mathbb{F}$, we calculate  $g_1+g_1+\dots+g_1$ with $a$ appearance of $g_1$ in the additive group $\mathbb{G}_1$, and denote it as $a. g_1$ or $\llbracket a \rrbracket_1$.  We note that finding  number $a$ from $\llbracket a \rrbracket_1$ is computationally infeasible. In addition, different inputs lead to different outputs. Moreover, the encryption operation is linear, i.e.,  $\llbracket a + b \rrbracket_1 = \llbracket a \rrbracket_1+ \llbracket b \rrbracket_1$, for two integers $a$ and $b$ \cite{el2017guide}.

Consider three integers $a$, $b$, and $c$, and assume  that we only have access to $\llbracket a \rrbracket_1, \llbracket b \rrbracket_1, \llbracket c \rrbracket_1$. Let us assume that we aim to verify if $c=a+b$. This can be simply done by checking if $\llbracket c \rrbracket_1 = \llbracket a \rrbracket_1 + \llbracket b \rrbracket_1$.  Now let us assume that we want to check of $c = ab$. This is not straight-forward, and is done through the notation of \emph{pairing} $e$. 
 
Let $e: \mathbb{G}_1 \times \mathbb{G}_2 \rightarrow \mathbb{G_{T}}$ be a non-trivial bilinear map from two groups  $\mathbb{G}_1$ and $\mathbb{G}_2$ to a group $\mathbb{G_{T}}$, and $g_1,g_2$ be generators of $\mathbb{G}_1, \mathbb{G}_2$ respectively. It has three properties:
\begin{itemize}
\item $g_1,g_2 \neq 1 \Rightarrow e(g_1,g_2) \neq 1$,
\item $\forall a,b \in \mathbb{F}: e(\llbracket a \rrbracket_1, \llbracket b \rrbracket_2) = ab \, e(g_1,g_2)$, where $\llbracket b \rrbracket_2$ is $b g_2$,
\item $e$ is efficiently computable.
\end{itemize}

Now if we have $\llbracket a \rrbracket_1$, $\llbracket b \rrbracket_2$ and $\llbracket c \rrbracket_1$, the encrypted versions $a$, $b$ and $c$ respectively, we can check $c = a * b$ by checking the equation $e(\llbracket a \rrbracket_1, \llbracket b \rrbracket_2) = e(\llbracket c \rrbracket_1, \llbracket 1 \rrbracket_2)$. We note that $\mathbb{G}_1$ and $\mathbb{G}_2$ can be the same group, with $g_1$ as the generator. In that case, we check if $e(\llbracket a \rrbracket_1, \llbracket b \rrbracket_1) = e(\llbracket c \rrbracket_1, \llbracket 1 \rrbracket_1)$

As mentioned, we use double brackets $\llbracket . \rrbracket$ to show the encrypted version of the scalars, but vectors can also be represented in this notation, for example $\llbracket [x_1,x_2,x_3] \rrbracket_1 = \left [ \llbracket x_1 \rrbracket_1, \llbracket x_2 \rrbracket_1, \llbracket x_3 \rrbracket_1\right ]$. Using this notation, we also have $\llbracket a x_1 + b x_2 \rrbracket_1 = a \llbracket x_1 \rrbracket_1 + b \llbracket x_2 \rrbracket_1$. 

For more details see \cite{washington2008elliptic}.
\end{itemize}

\subsection{Three main algorithms in zkSNARK} \label{Groth_prover}
There exist various versions of zkSNARK. Here we focus on the version proposed by Groth in \cite{groth2016size} which is one of the most efficient and popular QAP-based zkSNARKs. However, the schemes proposed in this paper can be applied to other variations of QAP-based zkSNARKs. In the following, we review three algorithms included in zkSNARK, the setup phase algorithm which is done  only once by an entity called as the \emph{trusted party}, the prover algorithm which is done by the \emph{prover}, and the verifier algorithm which is done by the \emph{verifier}. 
\begin{itemize}
\item \textbf{Setup phase algorithm}: This algorithm takes the function $F$ and the security parameter $\kappa \in \mathbb{N}$ as the input, and outputs $\mathcal{EK}$ and $\mathcal{VK}$. The security parameter specifies the size of the finite field $\mathbb{F}$. If $\kappa$ is large, the algorithm is more secure, at the cost of increasing the computation load. 

The setup phase algorithm is presented in Algorithm \ref{Setup}. 
We note that in Line 3, some random parameters are chosen. These random parameters are used to develop  $\mathcal{EK}$ and $\mathcal{VK}$ and will be deleted at the end of set-up phase.  $\mathcal{I}_{io}$, in Line 4,  is the set of indices of the wires that are the public input or the output of the arithmetic circuit. $\mathcal{I}_{mid}$, in Line 5, is the set of indices of the wires that are not the public input nor the output of the arithmetic circuit. It is obvious that $\mathcal{I}_{io} \cup \mathcal{I}_{mid} = \{0,...,m \}$.

In Line 9 and Line 10, $\mathcal{EK}$ and $\mathcal{VK}$ are generated respectively. All values generated during the algorithm except $\mathcal{EK}$ and $\mathcal{VK}$ are known as \emph{toxic waste}, and must be deleted at the end of the algorithm for ever. Because if someone has access to them, he can produce fake proofs. 

It is worth noting that Algorithm \ref{Setup} is heavy in terms of computation load. However, this phase is done for function $F$ only once, and is not function of the values of the wires, inputs, or outputs.  This means that the cost of this algorithm amortizes over many proof generations about $F$. Thus, in this paper we do not deal with the setup phase. To see how setup phase calculations can be done in a multiparty protocol, refer to \cite {bowe2018multi}. 

\begin{algorithm}[!htbp]
\setstretch{1.5}
	\caption{Setup Phase Algorithm}
	\label{Setup}
	\begin{algorithmic}[1]
		\Statex
		\textbf{Input:}
		function $F$.
		\State
		Convert $F$ into an arithmetic circuit.
		\State
		Build QAP $\mathcal{Q} = \{T(x)$, $\left \{L_{i}(x)\right \}$, $\left \{R_{i}(x)\right \}$, $\left \{O_{i}(x)\right \} \}$ where $i \in \{0,\dots,m\}$.
		\State
		Choose parameters $s, \alpha,\beta,\gamma,\delta$ uniformly at random from $\mathbb{F}$.
		\State
		Evaluate polynomials $\left \{L_{i}(x)\right \}$, $\left \{R_{i}(x)\right \}$, $\left \{O_{i}(x)\right \}$ at the point $s$.
		\State 
		\hskip1em  Let $\mathbf{k}^{vk} =  \left [k^{vk}_i  \right ]_{i \in \mathcal{I}_{io}} = \left [\frac{\beta L_i(s) + \alpha R_i(s) + O_i(s)}{\gamma}\right ]_{i \in \mathcal{I}_{io}}$ where $\mathcal{I}_{io}$ is the set of indices of the wires that are the public input or the output of the arithmetic circuit.
		\State \hskip1em 
		Let $\mathbf{k}^{pk} =  \left [k^{pk}_i  \right ]_{i \in \mathcal{I}_{mid}} = \left [\frac{\beta L_i(s) + \alpha R_i(s) + O_i(s)}{\delta} \right ]_{i \in \mathcal{I}_{mid}}$ where $\mathcal{I}_{mid}$ is the set of indices of the wires that are not the public input nor the output of the arithmetic circuit.
		\State \hskip1em Let $\mathbf{t} = \left [t_j \right ]_{j=0}^{n-2} = \left [\frac{s^j T(s)}{\delta} \right ]_{j=0}^{n-2}$.
		\State
		Calculate $\mathcal{EK}$ and $\mathcal{VK}$ as below (cryptographic operations):
		\State \hskip1em 
		$\mathcal{EK} = \{ \llbracket \alpha \rrbracket_1$, $\llbracket \beta \rrbracket_1$, $\llbracket \beta \rrbracket_2$, $\llbracket \delta \rrbracket_1$, $\llbracket \delta \rrbracket_2$, $\left [ \llbracket L_{i}(s) \rrbracket_1 \right]_{i=0}^{m}$, $\left [ \llbracket R_{i}(s) \rrbracket_1 \right]_{i=0}^{m}$, $\left [ \llbracket L_{i}(s) \rrbracket_2 \right]_{i=0}^{m}$, $\llbracket \mathbf{k}^{pk} \rrbracket_1$, $\llbracket \mathbf{t} \rrbracket_1\}$.
	           \State \hskip1em 
		$\mathcal{VK} = \{e(\llbracket \alpha \rrbracket_1,\llbracket \beta \rrbracket_2)$, $\llbracket \gamma \rrbracket_2$, $\llbracket \delta \rrbracket_2$, $\llbracket \mathbf{k}^{vk} \rrbracket_1\}$.
	           \State Erase all values generated during the algorithm except $\mathcal{EK}$ and $\mathcal{VK}$.
		\Statex
		\textbf{Outputs}: $\mathcal{EK}$, $\mathcal{VK}$.
	\end{algorithmic}
\end{algorithm}

\item \textbf{Prover algorithm}: Prover algorithm is presented in Algorithm \ref{Prover}. This algorithm contains three main parts. In the first part the prover builds the arithmetic circuit and calculates the values of all wires. Remember that the multiplication gate labeled by $\omega^j$ multiplies $L(\omega^j)$ and $R(\omega^j)$, and outputs $O(\omega^j)$. 

In the second part, the prover runs the function $\mathsf{polynomial-division}$ to calculate coefficients of the polynomial $H(x) = \frac{P(x)} {T(x)}$. In the last part the prover runs the function $\mathsf{compute-proof}$ in order to generate the proof.\par

Function $\mathsf{polynomial-division}$, which is to calculate coefficients of the polynomial $H(x) = \frac{P(x)} {T(x)}$, needs some explanation.  Recall that $\deg H(x) = \frac{P(x)} {T(x)}< n-1$. If  the prover has the values of $P(x)$ and $T(x)$ in some $n$ distinct points $\mathcal{D}$, $|\mathcal{D}| = n$, he can calculate $H(x)$ in $\mathcal{D}$ by simply dividing $P(x)$ by $T(x)$ for those points. Thus, he can recover the coefficients of $H(x)$ by some interpolation.  In this algorithm, we  set $\mathcal{D} = \{\eta, \eta \omega, \dots, \eta \omega^{n-1} \}$ where $\eta \in \mathbb{F} \setminus \mathcal{S}$. We will explain the reason for this choice $\mathcal{D}$ later.  Recall that, the prover does not even have the coefficients $P(x)=L(x)R(x)-O(x)$. Instead, he has the values $L(x)$, $R(x)$, $O(x)$ in $\mathcal{S}$. Thus, we take the following steps:
\begin{itemize}
\item The coefficients  of $L(x)$, $R(x)$ and $O(x)$ are calculated  by interpolation over the values of $L(x)$, $R(x)$ and $O(x)$ in $\mathcal{S}$. This is done  efficiently by taking the $\mathscr{FFT}_{\mathcal{S}}^{-1}$ of the vectors $\mathbf{a}$, $\mathbf{b}$ and $\mathbf{c}$, containing  the values of $L(x)$, $R(x)$ and $O(x)$ in $\mathcal{S}$ respectively (see Lines 4-9 of Algorithm~\ref{Prover}).

\item  The values of  $L(x)$, $R(x)$ and $O(x)$ on the set $\mathcal{D}$ is obtained by take $\mathscr{FFT}_{\mathcal{D}}$ of their coefficients (see Lines 10-12 of Algorithm~\ref{Prover}).

\item $T(x) = x^n - 1$ is calculated at $n$ points of $\mathcal{D}$. 

\item For each $x \in \mathcal{D}$, calculate $H(x) = \frac{L(x) . R(x) - O(x)}{T(x)}$ (see Line 13 of Algorithm~\ref{Prover}).
\item Take $\mathscr{FFT}_{\mathcal{D}}^{-1}$ of the values of $H(x)$ on the set $\mathcal{D}$ to obtain the coefficients of $H(x)$.
(see Line 14 of Algorithm~\ref{Prover}).
\end{itemize}

We note that by definition, $T(x)$ and $P(x)$ both are zero on the set $\mathcal{S}$. Therefore, we need $\mathcal{D} \cap \mathcal{S} = \varnothing$, otherwise calculating  $\frac{P(x)}{T(x)}$ in $\mathcal{D}$ becomes undefined.  Choosing $\mathcal{D} = \{\eta, \eta \omega, \dots, \eta \omega^{n-1} \}$ for some $\eta \in \mathbb{F} \setminus \mathcal{S}$ has the following advantages: 
\begin{itemize}
\item $\mathcal{S} \cap \mathcal{D} = \varnothing$. The reason is that if $\mathcal{S} \cap \mathcal{D} \neq \varnothing$ there is at least a member $a$, where $a \in \mathcal{S}$ and $a \in \mathcal{D}$. Therefore $\exists i:  a = \omega^i$ and $\exists j:  a = \eta \omega^j$ so $\omega^i = \eta \omega^j$, that implies $\eta = \omega^{i-j}$. It means that $\eta \in \mathcal{S}$, a contradiction. 
\item Recall that $T(x) = x^n - 1$. So it would take less than $2n$ operations to compute $T(x)$ at $n$ points $\mathcal{D}$, because given $T(\eta \omega^j) = \eta^{j} - 1$, $T( \eta \omega^{(j+1)} ) = \eta^{(j+1)} - 1$ can be computed by one multiplication and one addition.
\item For each $x \in \mathcal{D}$ we have $T(x) \neq 0$. The reason is that $T(x)$ is of the degree $n$ so it can have at most $n$ distinct roots. On the other hand, we know that $T(x) = 0$ on the set $\mathcal{S}$ which has nothing in common with $\mathcal{D}$.
\item Computing fast Fourier transform on $\mathcal{S}$ and $\mathcal{D}$ is easy. See  Subsection \ref{FFT_learning}.
\end{itemize}

Now we focus on the computational cost of each step. In Algorithm \ref{Prover}, Lines 1 and 2 incur computation of the order $O(n)$ where $n$ is the number of multiplication gates. Lines 18 - 21 incur $O(m\kappa)$ operations, where $m$ is the number of index wires, and $\kappa$ is the security parameter. If the security parameter is too large, these lines can be dominant in terms of computational cost.
Recall that it needs at most $2n$ multiplications to calculate $T(x)$ on set $\mathcal{D}$, so Line 13 needs $O(n)$ operations. Lines 7 - 12 and 14 are usually the bulk of computation, and incur computation cost of the order $O(n \, \log(n))$.  Other lines of Algorithm \ref{Prover} incur a small amount of computation cost. So the computation cost of the prover is of the order $O(n \, \log(n))$. 

In this paper we propose a scheme by which the prover can delegate his task to $N = K+T$ semi-honest servers, where at most $T$ of them may collude.  Each machine will have computation cost of the order $O(\frac{n}{K}\, \log(\frac{n}{K}))$ and the prover's computation cost will be of the order $O(n)$.

\begin{algorithm}[!htbp]
\setstretch{1.5}
	\caption{Prover algorithm}
	\label{Prover}
	\begin{algorithmic}[1]
		\Statex
		\textbf{Inputs:}
		$\mathcal{EK}$, function $F$ and its inputs.
		\State
		Convert $F$ into arithmetic circuit and build QAP just like the setup phase.
		\State
		Compute the values of all wires in the arithmetic circuit.
		\State \textbf{function} $\mathsf{polynomial-division}$ ($P(x)$, $T(x)$)
		\State \hskip1em Let $\mathbf{a} = \left [a_j \right ]_{j = 0}^{n-1}$, where $a_j = L(\omega^j)$. 
		\State \hskip1em Let $\mathbf{b} = \left [b_j \right ]_{j = 0}^{n-1}$, where $b_j = R(\omega^j)$. 
		\State \hskip1em Let $\mathbf{c} = \left [c_j \right ]_{j = 0}^{n-1}$, where $c_j = O(\omega^j)$.
		\State \hskip1em Calculate $\mathbf{{a}'} = \mathscr{FFT}_{\mathcal{S}}^{-1}(\mathbf{a})$. Recall that $\mathcal{S} = \left \{1,\omega,\dots,\omega^{n-1} \right \}$.
		\State \hskip1em Calculate $\mathbf{{b}'} = \mathscr{FFT}_{\mathcal{S}}^{-1}(\mathbf{b})$.
		\State \hskip1em Calculate $\mathbf{{c}'} = \mathscr{FFT}_{\mathcal{S}}^{-1}(\mathbf{c})$.
		\State \hskip1em Calculate $\mathbf{{a}''} = \mathscr{FFT}_{\mathcal{D}}(\mathbf{{a}'})$, where $\mathcal{D} = \left \{\eta, \eta \omega, \dots, \eta \omega^{n-1} \right \}$.
		\State \hskip1em Calculate $\mathbf{{b}''} = \mathscr{FFT}_{\mathcal{D}}(\mathbf{{b}'})$.
		\State \hskip1em Calculate $\mathbf{{c}''} = \mathscr{FFT}_{\mathcal{D}}(\mathbf{{c}'})$.
		\State \hskip1em Calculate $\left [h_j \right ]_{j=0}^{n-1}$, where $h_j = ({a_j}''.{b_j}''-{c_j}'') \div T(\eta \omega^{j})$.
		\State \hskip1em Calculate $\left [{h}'_j \right]_{j=0}^{n-1} = \mathscr{FFT}_{\mathcal{D}}^{-1}(\left [h_j \right ]_{j=0}^{n-1})$.
		\State \textbf{return} $\left [{h}'_j \right]_{j=0}^{n-1}$
		\State \textbf{function} $\mathsf{compute-proof}$
		\State \hskip1em Choose secret parameters $r, q$ independently and uniformly at random from $\mathbb{F}$.
		\State \hskip1em Calculate $\llbracket L_r \rrbracket_1 = \llbracket \alpha \rrbracket_1 + \sum_{i = 0}^{m} W_i \llbracket L_i(s) \rrbracket_1 + r \llbracket \delta \rrbracket_1$ where $W_i$ is the value carried by wire $i$.
		\State \hskip1em Calculate $ \llbracket R_q \rrbracket _1 = \llbracket \beta \rrbracket_1 + \sum_{i = 0}^{m} W_i  \llbracket R_i(s) \rrbracket_1 + q \llbracket \delta \rrbracket_1$
		\State \hskip1em Calculate $ \llbracket R_q \rrbracket_2 = \llbracket \beta \rrbracket_2 + \sum_{i = 0}^{m} W_i \llbracket R_i(s)]_2 + q \llbracket \delta \rrbracket_2$
		\State \hskip1em Calculate $ \llbracket K_{r,q} \rrbracket_1 = q \llbracket L_r \rrbracket_1 + r \llbracket R_q \rrbracket_1 - rq \llbracket \delta \rrbracket_1 + \sum_{i \in \mathcal{I}_{mid}} W_i \llbracket k_i^{pk} \rrbracket_1 + \sum_{j=0}^{n-2} h'_j \llbracket t_j \rrbracket_1$
		\State \textbf{return} $\pi = \{ \llbracket L_r \rrbracket_1, \llbracket R_q \rrbracket_2, \llbracket K_{r,q} \rrbracket_1 \}$
		\Statex
		\textbf{Outputs}: $\pi$, public inputs, and public outputs of $F$.
	\end{algorithmic}
\end{algorithm}

\item \textbf{Verifier algorithm}: Verifier algorithm is presented in Algorithm \ref{Verifier}. As you can see, computation cost of Line 1 in this algorithm is proportional to $| \mathcal{I}_{io} |$ which is the number of public wires. Line 2 incurs constant computation cost. 

\begin{algorithm}[!htbp]
\setstretch{1.3}
	\caption{Verifier algorithm}
	\label{Verifier}
	\begin{algorithmic}[1]
		\Statex
		\textbf{Inputs}: $\mathcal{VK}$, $\pi$, public inputs, and public outputs of $F$.
\State Compute $I = \sum_{i \in \mathcal{I}_{io}} W_i \llbracket k_i^{vk}(s) \rrbracket_1$.
		\State Check $e \left( \llbracket L_r \rrbracket_1,  \llbracket R_q \rrbracket_2 \right) = e \left( \llbracket \alpha \rrbracket_{1}, \llbracket \beta \rrbracket_{2} \right) + e \left(I,  \llbracket \gamma \rrbracket_2 \right) + e \left( \llbracket K_{r,q} \rrbracket_1, \llbracket \delta \rrbracket_2 \right)$.
		\Statex
		\textbf{Output}: yes or no.
	\end{algorithmic}
\end{algorithm}

\end{itemize}

\section{preliminaries}
\label{Preliminaries2}
\subsection{Fast Fourier transform ($\mathscr{FFT}$) over finite field}
\label{FFT_learning}
Definitions of this subsection is taken from \cite{learnFFT}.
\begin{definition}
$\omega$ is a primitive $n$th root of unity in a computation structure (e.g. finite field), if $\omega^{n}=1$ but for no $m$ such that $0 < m < n$ is $\omega^{m}=1$.
\end{definition}

\begin{definition}
Let $\omega \in \mathbb{F}$ be a primitive $n$th root of unity. Fourier transform of an $n$ dimensional vector $\mathbf{a} \triangleq [a_0, a_2, \ldots, a_{n-1}]^\top$ over $\mathcal{S} = \left \{1, \omega, \omega^2, \dots, \omega^{n-1} \right \}$ is equal to $[f(1), f(\omega), \dots, f(\omega^{(n-1)})]^\top$, where $f(x) = a_0 + \dots + a_{n-1} x^{n-1}$. 

The specific structure of the Fourier transform allows us to develop it with the complexity $O(n \, \log(n))$, known as \emph{fast Fourier transform} denoted by $\mathscr{FFT_{\mathcal{S}}}(\mathbf{a})$. 
\end{definition}

The Fourier transform can be shown in matrix form as,

\begin{align}\label{FFT_def1}
\mathscr{FFT_{\mathcal{S}}}(\mathbf{a}) = \mathbf{F}_n \mathbf{a},
\end{align}
where
\begin{align}
\mathbf{F}_n \triangleq
	\begin{bmatrix}
	 1 & 1 & 1 & \dots & 1 \\ 
	 1 & \omega & \omega^{2} & \dots & \omega^{n-1}\\ 
	 \dots & \dots & \dots & \dots & \dots \\ 
	 1 & \omega^{n-1} & \omega^{2(n-1)} & \dots & \omega^{(n-1)^{2}}
	\end{bmatrix}.
\end{align}

The inverse of Fourier transform, denoted by $\mathscr{FFT}_{\mathcal{S}}^{-1}$,  is equal to, 
\begin{align}\label{FFT_inverse_def}
\mathscr{FFT}_{\mathcal{S}}^{-1}(
	\begin{bmatrix}
	f(1)
	\\ f(\omega)
	\\ \dots
	\\ f(\omega^{n-1})
	\end{bmatrix}) = \frac{1}{n} \mathbf{G}_n 
	\begin{bmatrix}
	f(1)
	\\ f(\omega)
	\\ \dots
	\\ f(\omega^{n-1})
	\end{bmatrix},
\end{align}
where $\mathbf{G}_n = \mathbf{F}_n^{-1}$ is equal to 
\begin{align}
 \mathbf{G}_n \triangleq
	\begin{bmatrix}
	 1 & 1 & 1 & \dots & 1 \\ 
	 1 & \omega^{-1} & \omega^{-2} & \dots & \omega^{-(n-1)}\\ 
	 \dots & \dots & \dots & \dots & \dots\\ 
	 1 & \omega^{-(n-1)} & \omega^{-2(n-1)} & \dots & \omega^{-(n-1)^{2}}
	\end{bmatrix}.
\end{align}

Equation \eqref{FFT_inverse_def} can also be represented by,
\begin{align}
\label{definition_element2}
a_i = \frac{1}{n} \sum_{\mu=0}^{n-1} \left (\omega^{-i} \right )^\mu f(\omega^\mu) \: \: \textup{for} \: \:
i\in \left \{0,1,\dots, n-1\right \}.
\end{align}

Let us define $\mathbf{a_e} = \left [a_i \right ]_{i \, = \, even}$ and $\mathbf{a_o} = \left [a_i \right ]_{i \, = \, odd}$, then Fourier transform of $\mathbf{a}$ can be written as:
\begin{align}\label{FFT_odd_even}
\mathscr{FFT}_{\mathcal{S}}(\mathbf{a}) = 
\begin{bmatrix}
\mathscr{FFT}_{\mathcal{S'}} (\mathbf{a_e}) \\ 
\mathscr{FFT}_{\mathcal{S'}} (\mathbf{a_e})
\end{bmatrix}
+ 
\begin{bmatrix}
1\\ 
\omega \\ 
\omega^2 \\ 
\dots \\ 
\omega^{n-1}
\end{bmatrix}
\circ
\begin{bmatrix}
\mathscr{FFT}_{\mathcal{S'}} (\mathbf{a_o}) \\ 
\mathscr{FFT}_{\mathcal{S'}} (\mathbf{a_o})
\end{bmatrix}
\end{align}
where $\mathcal{S'} = \left \{1, \omega^2, \omega^4, \dots, \left (\omega^2 \right )^{\frac{n}{2}-1} \right \}$ and the symbol $\circ$ denotes element-wise multiplication. This recursive structure has been used to develop algorithms that can do Fourier transform with complexity $O(n \, \log (n))$.

Calculating $\mathscr{FFT_{\mathcal{D}}}(\mathbf{a})$, where $\mathcal{D} = \left \{ \eta \omega^i \right \}_{i=0}^{n-1}$, $\eta \in \mathbb{F} \setminus \mathcal{S}$ incurs $O(n \, \log(n))$ operations too. Because if $\mathbf{\Xi}$ be a diagonal matrix, whose $i$th diagonal entry is $\eta^{i-1}$, we have the followings,
\begin{align}
\label{D_set1}
\mathscr{FFT_{\mathcal{D}}}(\mathbf{a}) = \mathscr{FFT_{\mathcal{S}}} \left (\mathbf{\Xi} \mathbf{a} \right) 
\end{align}
\begin{align}
\label{D_set2}
\mathscr{FFT}_{\mathcal{D}}^{-1}(\mathbf{a}) = \mathbf{\Xi}^{-1}  \mathscr{FFT}_{\mathcal{S}}^{-1} \left (\mathbf{a} \right )
\end{align}
Calculating the elements of ${\mathbf{\Xi}}$ or ${\mathbf{\Xi}^{-1}}$ requires $n$ multiplications, so the computation complexity of $\mathscr{FFT_{\mathcal{D}}}$ and $\mathscr{FFT}_{\mathcal{D}}^{-1}$ is of the order $O(n\, \log(n))$.

\subsection{Lagrange Sharing~\cite{yu2018lagrange}}
\label{LCC}
Consider a system including a  master and a cluster of $N \in \mathbb{N}$ servers. The master aims to share the set of private vectors $\{ \mathbf{x}_1, \dots,\mathbf{x}_K \}$, $K\in \mathbb{N}$,  for some finite field $\mathbb{F}$,  with those $N$ servers. The sharing must be such that if  any subset of $T$ servers collude, they  gain no information about the input data. Various approaches, such as ramp sharing \cite{bai2006strong} and Lagrange sharing \cite{yu2018lagrange} have been proposed for such sharing. In this paper, we use Lagrange Sharing \cite{yu2018lagrange}, which works as follows: 

Let $\{ \beta_{1},\dots,\beta_{K+T} \}$ and $\{ \alpha_{1},\dots, \alpha_{N}\}$ be two sets of some publicly known distinct non-zero points in the finite field $\mathbb{F}$ such that $\{\alpha_{\theta}\}_{\theta=1}^{N} \cap \{\beta_{j}\}_{j = 1}^{K} = \varnothing$. 

To code the secret inputs $\mathbf{x}_j \in \mathbb{F}^{M}$ for $j \in \left \{1,2,\dots,K \right \}$, first the master chooses $T$ vectors $\mathbf{y}_{j}$ for $j \in \left \{K+1,K+2,\dots,K+T \right \}$, 
 independently and uniformly at random from $\mathbb{F}^{M}$.  Then it forms the Lagrange coding polynomial $\mathbf{u}: \mathbb{F} \rightarrow \mathbb{F}^{M}$, 
defined as,  
\begin{align}\label{LCC_u}
\mathbf{u}(z) \triangleq 
 \sum_{j = 1}^{K} \mathbf{x}_{j} \prod_{k = 1, k \neq j}^{K+T} \frac{z - \beta_{k}}{\beta_{j} - \beta_{k}} + \sum_{j = K+1}^{K+T} \mathbf{y}_{j}  \prod_{k = 1, k \neq j}^{K+T} \frac{z - \beta_{k}}{\beta_{j} - \beta_{k}}.
\end{align}
 Finally in this stage, the master sends $\mathbf{u}(\alpha_{\theta})$ to Server $\theta$.

\section{The Proposed Scheme}
\label{sec:Proposed-Scheme}
Consider a system including a prover, $N = K + T$ semi-honest servers, and a globally known function $y = F(u,v)$ that the prover wants to generate a zkSNARK proof about it (see Section \ref{sec:Background}). Suppose that the function $F$ has a large arithmetic circuit and therefore it is a difficult task to produce proof $\pi$ about $F$. Therefore the prover may not be able to do the task alone, and he needs to delegate his task to the servers.  By semi-honest, we mean that the servers follow the algorithms correctly, but a subset of up to $T$ of them may collude to gain information about secret data. Note that if some of the servers are adversarial meaning that they don't follow the algorithm, the generated proof $\pi$ will fail the verification. Thus the prover itself can detect it, using the inherent zkSNARK verification ability.  In the other words, if for any reason, the generated proof is not valid, the prover will find out by running the verifier algorithm for his own.

As mentioned in Section \ref{sec:introduction}, the advantage of Trinocchio \cite{schoenmakers2016trinocchio} algorithm is that the prover can delegate his task to several untrusted servers. On the downside, in Trinocchio, the computation complexity of each server is equal to the main task, which is assumed to be large. On the other hand, the advantage of DIZK \cite{wu2018dizk} algorithm is that it partitions the task of the prover and gives a part to each server. The main  disadvantage of DIZK is that the servers must be trusted. In this section, we design an algorithm that has the advantage of both DIZK and Trinocchio. 

Recall that the computation complexity of zkSNARK $\mathsf{Prover  \ Algorithm}$ is equal to $O(n\log(n))$, which is driven by computing  $\mathbf{\hat{\hat{a}}} = \mathscr{FFT}_{\mathcal{D}} \left (\mathbf{\mathscr{FFT}_{\mathcal{S}}^{-1} (\mathbf{a})} \right )$, $\mathbf{\hat{\hat{b}}} = \mathscr{FFT}_{\mathcal{D}} \left (\mathbf{\mathscr{FFT}_{\mathcal{S}}^{-1} (\mathbf{b})} \right )$, $\mathbf{\hat{\hat{c}}} = \mathscr{FFT}_{\mathcal{D}} \left (\mathbf{\mathscr{FFT}_{\mathcal{S}}^{-1} (\mathbf{c})} \right )$,  in Lines 7-12 and also $\left [{h}'_j \right]_{j=0}^{n-1} = \mathscr{FFT}_{\mathcal{D}}^{-1}(\left [h_j \right ]_{j=0}^{n-1})$ in Line 14. We note that $n$ is can be easily in the order of $10^{10}$ or $2^{30}$. Thus, the factor $\log n$ is equal to $30$ which is considerable. If we can replace this factor by constant number, say 5, we can reduce the execution time six times which is very important.

In Subsection \ref{Proposed_Algorithm2}, we propose multiparty algorithms for computing $\mathscr{FFT}_{\mathcal{D}} \left (\mathbf{\mathscr{FFT}_{\mathcal{S}}^{-1} (\mathbf{a})} \right )$, and in Subsection \ref{H_algorithm}, we propose the main algorithm.

\subsection{The multiparty algorithm for computing $\mathscr{FFT}_{\mathcal{D}} \left (\mathbf{\mathscr{FFT}_{\mathcal{S}}^{-1} (\mathbf{a})} \right )$} 
\label{Proposed_Algorithm2}
Suppose that the prover has a large secret vector $\mathbf{a}$ of dimension $n$, and aims to compute $\mathscr{FFT}_{\mathcal{D}} \left (\mathbf{\mathscr{FFT}_{\mathcal{S}}^{-1} (\mathbf{a})} \right )$, using the cluster of $N$ semi-honest servers, where up to $T$ may collude.  
One approach would be consider $\mathscr{FFT}_{\mathcal{D}} \left (\mathbf{\mathscr{FFT}_{\mathcal{S}}^{-1} (\mathbf{a})} \right )$ as a matrix multiplication problem $\frac{1}{n}\mathbf{G}_{\mathcal{D}} \mathbf{F}_{\mathcal{S}}  \mathbf{a}$, as defined in Subsection \ref{FFT_learning}. Then, we can use secure multiparty computation for massive data or secure matrix multiplication methods \cite{nodehi2021secure, najarkolaei2020coded} by partitioning and securely sharing vector $\mathbf{a}$ and partitioning and sharing matrix $\mathbf{F}_{\mathcal{S}}$ $\mathbf{G}_{\mathcal{D}}$ with the servers. Then each server simply multiply what it received and sends it back the  servers. However, using this approach, we lose the Fourier transform structure, and the computation complexity would be of the order $O({\frac{n^{2}}{K}})$ for each server.

Here we propose an alternative approach in Algorithm \ref{fft_MPC_v2} such that the complexity of computation in each server is equal to $O(\frac{n}{K} \log(\frac{n}{K}))$, and the complexity of computation in the prover is equal to $O(Kn)$. 

\begin{algorithm}[!htbp]
\setstretch{1.75}
\caption{Multiparty algorithm for computing $\mathbf{\hat{\hat{a}}} = \mathscr{FFT}_{\mathcal{D}} \left (\mathbf{\mathscr{FFT}_{\mathcal{S}}^{-1} (\mathbf{a})} \right )$}
\label{fft_MPC_v2}
\begin{algorithmic}[1]
\Statex
\textbf{Input}: vector $\mathbf{a}=\left [a_i \right ]_{i=0}^{n-1}$ of the length $n$.
\Statex
The prover does the following  steps:
\State \hskip1em 
partitions $\mathbf{a}$ into $K$ vectors $\mathbf{a}^{(j)}=\left [ a_{Kt+j-1}  \right ]_{t=0}^{\frac{n}{K}-1}$ for $j \in \{1,\dots,K\}$.
\State \hskip1em 
picks $T$ vectors $\mathbf{v}_j \in \mathbb{F}^{\frac{n}{K}}$, $j \in \{K+1,\dots,K+T\}$, independently and uniformly at random.
\State \hskip1em 
defines $\mathbf{u}(z)$ as  (\ref{FFT_first_algorthm_u}) and sends $\mathbf{u}(\alpha_\theta)$ to Server $\theta$, for  $\theta \in \{1,\dots,N\}$.
\Statex
Server $\theta$ does the following  steps:
\State \hskip1em
computes $\mathscr{FFT}_{\mathcal{S'}}^{-1}(\mathbf{u}(\alpha_\theta))$ for $\mathcal{S'} = \left \{1, \omega^K, \dots, \omega^{K \left (\frac{n}{K}-1 \right)} \right \}$.
\State \hskip1em
picks $T$ vectors $\mathbf{v}_{j}^{(\theta)} \in \mathbb{F}^{\frac{n}{K}}$, $j \in \{K+1,\dots,K+T\}$, independently and uniformly at random.

\State \hskip1em 
defines  $\mathbf{u}^{(\theta)}(z)$ as  (\ref{FFT_version2_algorthm_u}), and sends $\mathbf{u}^{(\theta)}(\alpha_\gamma)$ to Server $\gamma$, for  $\gamma \in \{1,\dots,N\}$.
\State \hskip 1em 
calculates $\mathbf{u}_{\theta} \triangleq \sum_{\gamma=1}^{N} \mathbf{u}^{(\gamma)}(\alpha_\theta)$, upon receiving $\mathbf{u}^{(\gamma)}(\alpha_\theta)$, $\gamma \in \{1, \ldots, N\}$. 
\State \hskip1em
computes $\mathscr{FFT}_{\mathcal{D'}}(\mathbf{u}_\theta)$ where $\mathcal{D'} = \left \{\eta^K, \eta^K \omega^K, \dots, \eta^K \omega^{K \left (\frac{n}{K}-1 \right)} \right \}$, and sends it to the prover.
\Statex
Upon receiving $\mathscr{FFT}_{\mathcal{D'}}(\mathbf{u}_\theta)$, $\theta \in \{1,\dots,N\}$,   the prover does the following steps:
\State \hskip 1em 
forms $\mathbf{u}_{p}(z)$ as defined in (\ref{prover_decoding_poly}), and evaluates $\mathbf{u}_{p}(\beta_j)$ for $j \in \{1,\dots,K \}$.
\State \hskip1em
calculates the vector $\mathbf{\hat{\hat{a}}} =\left [\hat{\hat{a}}_i \right ]_{i=0}^{n-1}$ as $\hat{\hat{a}}_i \triangleq \sum_{j=1}^{K} \left (\omega^{i} \eta \right )^{(j-1)} \left( \mathbf{u}_{p}(\beta_j) \right )_{i \mod \frac{n}{K}}$.
\Statex
\textbf{Output}: $\mathbf{\hat{\hat{a}}}$
\end{algorithmic}
\end{algorithm}


In step 1,  vector $\mathbf{a}$ of length $n$ is partitioned into $K$ vectors of length $\frac{n}{K}$. We have assumed that $n$ and $K$ are powers of 2, so $n$ is divisible by $K$. 

In step 2, the prover picks $T$ vectors $\mathbf{v}_j \in \mathbb{F}^{\frac{n}{K}}$ independently and uniformly at random. These random vectors are used to guarantee  the privacy of $\mathbf{a}$ against $T$ colluding servers.

In step 3, to encode the information, the prover evaluates the Lagrange polynomial $\mathbf{u}(z)$ according to (\ref{FFT_first_algorthm_u}) in public points $\left \{\alpha_\theta \right \}_{\theta=1}^{N}$. It then sends $\mathbf{u}(\alpha_\theta)$ to Server $\theta$. For more information about Lagrange polynomial coding refer to Section \ref{LCC}. 

\begin{align}
\label{FFT_first_algorthm_u}
\mathbf{u}(z) =  \sum_{j = 1}^{K} \mathbf{a}^{(j)}  \prod_{k = 1, k \neq j}^{K+T} \frac{z - \beta_{k}}{\beta_{j} - \beta_{k}} + \sum_{j = K+1}^{K+T} \mathbf{v}_{j}  \prod_{k = 1, k \neq j}^{K+T} \frac{z - \beta_{k}}{\beta_{j} - \beta_{k}}.
\end{align}

In step 4, Server $\theta$ computes $\mathscr{FFT}_{\mathcal{S'}}^{-1}(\mathbf{u}(\alpha_\theta))$. 

In step 5, Server $\theta$ picks $T$ random vectors $\mathbf{v}_{j}^{(\theta)} \in \mathbb{F}^{\frac{n}{K}}$, for $j \in \{K+1,\dots,K+T\}$, independently and uniformly at random. 

Next, in step 6, Server $\theta$ forms $\mathbf{u}^{(\theta)}(z)$ as,  
\begin{align}\label{FFT_version2_algorthm_u}
& \mathbf{u}^{(\theta)}(z) = \sum_{j = 1}^{K} \mathbf{x}_{j}^{(\theta)}  \prod_{k = 1, k \neq j}^{K+T} \frac{z - \beta_{k}}{\beta_{j} - \beta_{k}} + \sum_{j = K+1}^{K+T} \mathbf{v}_{j}^{(\theta)}  \prod_{k = 1, k \neq j}^{K+T} \frac{z - \beta_{k}}{\beta_{j} - \beta_{k}},
\end{align}
where 
\begin{align}\label{definition_x}
x_{j,t}^{(\theta)} \triangleq \left ( \frac{K}{n} \sum_{l =1}^{K} \omega^{-(Kt + j - 1) (l-1)} 
 \mathscr{FFT}_{\mathcal{S'}}^{-1} \left (\mathbf{u} \left (\alpha_\theta \right) \right) \prod_{k = 1, k \neq \theta}^{N} \frac{\beta_l - \alpha_{k}}{\alpha_{\theta} - \alpha_{k}} \right )_{{Kt + j - 1 \mod \frac{n}{K}}}.
 \end{align}

In step 6, Server $\theta$ sends $\mathbf{u}^{(\theta)}(\alpha_\gamma)$ to Server $\gamma$.

In step 7, Server $\theta$ calculates $\mathbf{u}_\theta \triangleq \sum_{\gamma=1}^{N} \mathbf{u}^{(\gamma)}(\alpha_\theta)$, where $\mathbf{u}^{(\gamma)}(\alpha_\theta)$ is the data sent from Server $\gamma$ to Server $\theta$. 

In step 8, Server $\theta$ computes $\mathscr{FFT}_{\mathcal{D'}}(\mathbf{u}_\theta)$, where $\mathcal{D'} = \left \{\eta^K, \eta^K \omega^K, \dots, \eta^K \omega^{K \left (\frac{n}{K}-1 \right)} \right \}$, and sends it to the prover. 

In step 9, the prover forms the polynomial $\mathbf{u}_{p}(z)$ as,
\begin{align} \label{prover_decoding_poly}
\mathbf{u}_{p}(z) \triangleq \sum_{\theta = 1}^{N} \mathscr{FFT}_{\mathcal{D'}}(\mathbf{u}_\theta) \prod_{k = 1, k \neq \theta}^{N} \frac{z - \alpha_{k}}{\alpha_{\theta} - \alpha_{k}},
\end{align}
and evaluates  $\mathbf{u}_{p}(\beta_j)$ for $j \in \{1,\dots,K \}$.

In step 10, the prover calculates the vector $\mathbf{\hat{\hat{a}}} =\left [\hat{\hat{a}}_i \right ]_{i=0}^{n-1}$ as $\hat{\hat{a}}_i \triangleq \sum_{j=1}^{K} \left (\omega^{i} \eta \right )^{(j-1)} \left( \mathbf{u}_{p}(\beta_j) \right )_{i \mod \frac{n}{K}}$.


In what follows, we prove the correctness and privacy of the proposed scheme. 

\subsubsection{Correctness} In this part we prove that Algorithm \ref{fft_MPC_v2} is correct and gives the desired output.

\begin{lemma}\label{lemma1}
Let $\mathbf{u}(z)$ be defined as in (\ref{LCC_u}). The following equality always holds.
\begin{align}
\sum_{\theta=1}^{N} \mathbf{u}(\alpha_\theta)  \prod_{k = 1, k \neq \theta}^{N} \frac{\beta_{j} - \alpha_{k}}{\alpha_{\theta} - \alpha_{k}} = \mathbf{x}_{j}
\end{align}
\end{lemma}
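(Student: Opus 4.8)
The claim is nothing but the Lagrange interpolation identity applied twice, so the plan is short. First I would observe that, read coordinate-wise, $\mathbf{u}(z)$ from (\ref{LCC_u}) is a polynomial of degree at most $K+T-1$: it is a linear combination of the Lagrange basis polynomials $\ell_j(z) \triangleq \prod_{k = 1,\, k \neq j}^{K+T} \frac{z - \beta_k}{\beta_j - \beta_k}$ attached to the $K+T$ distinct nodes $\beta_1,\dots,\beta_{K+T}$, each of which has degree $K+T-1$. Since $\ell_j(\beta_i) = 1$ when $i = j$ and $\ell_j(\beta_i) = 0$ when $i \neq j$, evaluating (\ref{LCC_u}) at the nodes gives $\mathbf{u}(\beta_j) = \mathbf{x}_j$ for $j \in \{1,\dots,K\}$ (and $\mathbf{u}(\beta_j) = \mathbf{y}_j$ for $j \in \{K+1,\dots,K+T\}$, though only the first range is needed here).

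Next, because $N = K+T$, the polynomial $\mathbf{u}$ has degree at most $N-1$, hence is uniquely determined by its values at any $N$ distinct points; in particular it is recovered from the evaluations $\mathbf{u}(\alpha_1),\dots,\mathbf{u}(\alpha_N)$ (the $\alpha_\theta$ are distinct by assumption, so the denominators below are nonzero) via
\[
\mathbf{u}(z) = \sum_{\theta=1}^{N} \mathbf{u}(\alpha_\theta) \prod_{k = 1,\, k \neq \theta}^{N} \frac{z - \alpha_k}{\alpha_\theta - \alpha_k}.
\]
The quickest justification is that the difference of the two sides is a polynomial of degree at most $N-1$ vanishing at the $N$ distinct points $\alpha_1,\dots,\alpha_N$, hence identically zero.

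Finally, I would specialize this identity to $z = \beta_j$ for a fixed $j \in \{1,\dots,K\}$ and substitute $\mathbf{u}(\beta_j) = \mathbf{x}_j$ from the first step, which is exactly the asserted equality $\sum_{\theta=1}^{N} \mathbf{u}(\alpha_\theta) \prod_{k=1,\, k \neq \theta}^{N} \frac{\beta_j - \alpha_k}{\alpha_\theta - \alpha_k} = \mathbf{x}_j$. There is no genuine obstacle; the only points needing care are the degree bookkeeping (namely $\deg \ell_j = K+T-1 = N-1$, so interpolation through the $N$ points $\alpha_\theta$ is exactly tight) and the standing hypothesis that the $\alpha_\theta$ are pairwise distinct. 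One subtlety worth a line of the write-up is that $\mathbf{u}$ is $\mathbb{F}^M$-valued, so all of the above should be understood entrywise; this is legitimate because the interpolation coefficients $\prod_{k \neq \theta}(z-\alpha_k)/(\alpha_\theta-\alpha_k)$ are scalars in $\mathbb{F}$, and the argument goes through in each coordinate identically.
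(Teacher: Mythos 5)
Your argument is correct and is essentially the same as the paper's: both establish $\mathbf{u}(\beta_j)=\mathbf{x}_j$ from the Lagrange construction, then observe that the left-hand side is the Lagrange interpolant of $\mathbf{u}$ through the $N$ nodes $\alpha_\theta$, which coincides with $\mathbf{u}$ itself because both have degree at most $N-1$. The only cosmetic difference is that the paper names this interpolant $\mathbf{u}'$ before specializing at $z=\beta_j$, while you state the interpolation identity directly; the degree bookkeeping, the role of distinctness of the $\alpha_\theta$, and the coordinate-wise reading are identical.
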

\begin{proof}
Let define $\mathbf{u}'(z)$, a polynomial of degree $N-1$ as,
\begin{align}
\mathbf{u}'(z) \triangleq \sum_{\theta=1}^{N} \mathbf{u}(\alpha_\theta)  \prod_{k = 1, k \neq \theta}^{N} \frac{z - \alpha_{k}}{\alpha_{\theta} - \alpha_{k}}.
\end{align}
We note that $\mathbf{u}'(\alpha_{\theta}) = \mathbf{u}(\alpha_{\theta})$ for $\theta \in \{1,\dots,N \}$. Recall that $\mathbf{u}(z)$ is a polynomial of degree $K+T-1=N-1$. Because these two polynomials of the degree $N-1$ have the same values in $N$ different points, we can conclude $\mathbf{u}'(z) = \mathbf{u}(z)$ for any $z \in \mathbb{F}$. 

Also by considering the definition of $\mathbf{u}(z)$ in (\ref{LCC_u}), we can see $\mathbf{u}(\beta_j) = \mathbf{x}_j$ for $j \in \{1, \dots, K\}$. As a result, we have $\mathbf{u}'(\beta_j) = \mathbf{u}(\beta_j) = \mathbf{x}_j$. 
\end{proof}

\begin{lemma}\label{lemma2}
Let $\mathbf{\hat{a}} = \left [\hat{a}_i \right ]_{i=0}^{n-1} \triangleq \mathscr{FFT}_{\mathcal{S}}^{-1} \left ( \mathbf{a} \right )$, partitioned into $K$ vectors $\mathbf{\hat{a}}^{(j)} \triangleq \left [\hat{a}_{Kt+j-1} \right ]_{t = 0}^{\frac{n}{K}-1}$,  $j \in \{1,\dots,K \}$. Then we have 
$\sum_{\gamma = 1}^{N} \mathbf{x}_{j}^{(\gamma)} = \mathbf{\hat{a}}^{(j)}$, where 
$\sum_{\gamma = 1}^{N} \mathbf{x}_{j}^{(\gamma)} = \left [x_{j,t}^{(\gamma)} \right ]_{t = 0}^{\frac{n}{K}-1}$ is defined in (\ref{definition_x}). 
\end{lemma}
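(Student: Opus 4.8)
The plan is to push the linear map $\mathscr{FFT}_{\mathcal{S'}}^{-1}$ through the Lagrange interpolation identity of Lemma~\ref{lemma1}, and then to recognize the resulting expression as a radix-$K$ Cooley--Tukey decomposition of $\mathscr{FFT}_{\mathcal{S}}^{-1}(\mathbf{a})$. Since $\mathscr{FFT}_{\mathcal{S'}}^{-1}$ acts linearly on $\mathbb{F}^{n/K}$, applying it to $\mathbf{u}(z)$ from (\ref{FFT_first_algorthm_u}) produces a polynomial $\mathbf{w}(z)\triangleq\mathscr{FFT}_{\mathcal{S'}}^{-1}(\mathbf{u}(z))$ which is again of the Lagrange form (\ref{LCC_u}), now with data vectors $\mathscr{FFT}_{\mathcal{S'}}^{-1}(\mathbf{a}^{(j)})$ and masks $\mathscr{FFT}_{\mathcal{S'}}^{-1}(\mathbf{v}_j)$; moreover $\mathbf{w}(\alpha_\theta)=\mathscr{FFT}_{\mathcal{S'}}^{-1}(\mathbf{u}(\alpha_\theta))$, again by linearity. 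Lemma~\ref{lemma1} applied to $\mathbf{w}$ therefore gives, for every $l\in\{1,\dots,K\}$ and every coordinate index $s$,
\[
\sum_{\theta=1}^{N}\Bigl(\mathscr{FFT}_{\mathcal{S'}}^{-1}(\mathbf{u}(\alpha_\theta))\Bigr)_{s}\prod_{k=1,\,k\neq\theta}^{N}\frac{\beta_l-\alpha_k}{\alpha_\theta-\alpha_k}\;=\;\Bigl(\mathscr{FFT}_{\mathcal{S'}}^{-1}(\mathbf{a}^{(l)})\Bigr)_{s}.
\]

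Next I would sum the definition (\ref{definition_x}) over $\gamma\in\{1,\dots,N\}$. The scalars $\omega^{-(Kt+j-1)(l-1)}$ and the extracted coordinate $s=(Kt+j-1)\bmod\frac{n}{K}$ do not depend on the server index $\gamma$, so I may interchange the sums over $\gamma$ and $l$ and substitute the identity above, collapsing the server sum into
\[
\sum_{\gamma=1}^{N}x_{j,t}^{(\gamma)}\;=\;c\sum_{l=1}^{K}\omega^{-(Kt+j-1)(l-1)}\Bigl(\mathscr{FFT}_{\mathcal{S'}}^{-1}(\mathbf{a}^{(l)})\Bigr)_{(Kt+j-1)\bmod\frac{n}{K}},
\]
where $c$ is the normalizing constant of (\ref{definition_x}).

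Finally I would expand the target quantity $\hat{a}_{Kt+j-1}=\bigl(\mathscr{FFT}_{\mathcal{S}}^{-1}(\mathbf{a})\bigr)_{Kt+j-1}$ through (\ref{definition_element2}) and split the summation index as $\mu=K\tau+(l-1)$ with $l\in\{1,\dots,K\}$, $\tau\in\{0,\dots,\frac{n}{K}-1\}$, which is a bijection onto $\{0,\dots,n-1\}$. Writing $\omega^{-(Kt+j-1)\mu}=\omega^{-(Kt+j-1)(l-1)}\,(\omega^{K})^{-(Kt+j-1)\tau}$ and using that $\omega^{K}$ is a primitive $\frac{n}{K}$-th root of unity (so that $(\omega^{K})^{-(Kt+j-1)\tau}$ depends only on $(Kt+j-1)\bmod\frac{n}{K}$, and the set $\mathcal{S'}$ is exactly the set of $\frac{n}{K}$-th roots of unity), the inner sum over $\tau$ becomes, up to normalization, precisely $\bigl(\mathscr{FFT}_{\mathcal{S'}}^{-1}(\mathbf{a}^{(l)})\bigr)_{(Kt+j-1)\bmod\frac{n}{K}}$, while $a_{K\tau+l-1}=\bigl(\mathbf{a}^{(l)}\bigr)_{\tau}$. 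Matching the $\frac{1}{n}$ of the length-$n$ transform against the $\frac{K}{n}$ of the length-$\frac{n}{K}$ transform, together with the constant $c$, then shows that this equals the displayed expression for $\sum_{\gamma}x_{j,t}^{(\gamma)}$; since $t$ and $j$ were arbitrary, this yields $\sum_{\gamma=1}^{N}\mathbf{x}_{j}^{(\gamma)}=\mathbf{\hat{a}}^{(j)}$.

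The Lagrange step is essentially immediate once one notices that $\mathscr{FFT}_{\mathcal{S'}}^{-1}$ commutes with Lagrange interpolation by linearity. I expect the only real care to be needed in the last step: checking that the stride-$K$ reindexing $\mu\mapsto(l-1,\tau)$ is consistent with the stride-$K$ partitions of \emph{both} $\mathbf{a}$ and $\mathbf{\hat{a}}$, verifying that $(\omega^{K})^{-(Kt+j-1)\tau}$ genuinely sees only $(Kt+j-1)\bmod\frac{n}{K}$, and confirming that all the transform-length normalization factors and the constant $c$ combine correctly --- a standard but error-prone Cooley--Tukey bookkeeping.
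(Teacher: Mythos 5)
Your proof follows essentially the same route as the paper's: push $\mathscr{FFT}_{\mathcal{S'}}^{-1}$ through the Lagrange evaluations by linearity, apply Lemma~\ref{lemma1} to collapse the sum over servers, and then recognize the result as the stride-$K$ Cooley--Tukey decimation of $\mathscr{FFT}_{\mathcal{S}}^{-1}(\mathbf{a})$, the step the paper compresses into a single line labelled ``by the definition of Fourier transform.'' Your caution about the normalization is well placed: carrying the length-$\tfrac{n}{K}$ factor $\tfrac{K}{n}$ of $\mathscr{FFT}_{\mathcal{S'}}^{-1}$ through the decimation and comparing with the $\tfrac{1}{n}$ of the length-$n$ transform shows the prefactor in (\ref{definition_x}) must be $\tfrac{1}{K}$ rather than the stated $\tfrac{K}{n}$ for the identity to close exactly --- a detail the paper's one-line step does not exhibit.
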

\begin{proof}
It is sufficient to prove $\sum_{\gamma = 1}^{N} x_{j,t}^{(\gamma)} = \left ( \mathscr{FFT}_{\mathcal{S}}^{-1} \left ( \mathbf{a} \right) \right )_{{Kt + j - 1 \mod \frac{n}{K}}} $ as,
\begin{align}
\begin{split}
\sum_{\gamma = 1}^{N} x_{j,t}^{(\gamma)} & \stackrel{(a)}{=} \sum_{\gamma = 1}^{N} \left ( \frac{K}{n} \sum_{l =1}^{K} \omega^{-(Kt + j - 1) (l-1)} \mathscr{FFT}_{\mathcal{S'}}^{-1} \left (\mathbf{u} \left (\alpha_\gamma \right) \right) 
\prod_{k = 1, k \neq \gamma}^{N} \frac{\beta_l - \alpha_{k}}{\alpha_{\gamma} - \alpha_{k}} \right )_{{Kt + j - 1 \mod \frac{n}{K}}}\\
& \stackrel{(b)}{=} \left ( \frac{K}{n} \sum_{l =1}^{K} \omega^{-(Kt + j - 1) (l-1)} 
\mathscr{FFT}_{\mathcal{S'}}^{-1} \left ( \sum_{\gamma = 1}^{N} \mathbf{u} \left (\alpha_\gamma \right) \prod_{k = 1, k \neq \gamma}^{N} \frac{\beta_l - \alpha_{k}}{\alpha_{\gamma} - \alpha_{k}} \right) \right )_{{Kt + j - 1 \mod \frac{n}{K}}} \\
& \stackrel{(c)}{=} \left ( \frac{K}{n} \sum_{l =1}^{K} \omega^{-(Kt + j - 1) (l-1)} 
\mathscr{FFT}_{\mathcal{S'}}^{-1} \left ( \mathbf{a}^{(l)} \right) \right )_{{Kt + j - 1 \mod \frac{n}{K}}} \\
& \stackrel{(d)}{=} \left ( \mathscr{FFT}_{\mathcal{S}}^{-1} \left ( \mathbf{a} \right) \right )_{{Kt + j - 1 \mod \frac{n}{K}}}, \\
\end{split}
\nonumber
\end{align}
where (a) follows from the definition of $x_{j,t}^{(\gamma)}$ in \eqref{definition_x}, (b) holds because of the linearity of Fourier transform, (c) follows from Lemma \ref{lemma1}, and (d) is based on the definition of Fourier transform.
\end{proof}

The following theorem establishes the correctness of Algorithm~\ref{fft_MPC_v2}.

\begin{theorem}\label{corr_proof}
The vector $\mathbf{\hat{\hat{a}}}$, calculated in step 10 of Algorithm \ref{fft_MPC_v2}, is equal to $\mathscr{FFT}_{\mathcal{D}} \left ( \mathscr{FFT}_{\mathcal{S}}^{-1} \left (\mathbf{a} \right) \right )$.
\end{theorem}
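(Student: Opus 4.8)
The plan is to unwind Algorithm~\ref{fft_MPC_v2} from the inside out, using Lemmas~\ref{lemma1}--\ref{lemma2} together with the linearity of the Fourier transform, and to close with an elementary radix-$K$ Cooley--Tukey decomposition of $\mathscr{FFT}_{\mathcal{D}}$. Throughout, write $\mathbf{\hat{a}} \triangleq \mathscr{FFT}_{\mathcal{S}}^{-1}(\mathbf{a})$ and $\mathbf{\hat{a}}^{(j)} = \left[\hat{a}_{Kt+j-1}\right]_{t=0}^{\frac{n}{K}-1}$ as in Lemma~\ref{lemma2}.

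First I would identify the auxiliary polynomial $\mathbf{q}(z) \triangleq \sum_{\gamma=1}^{N}\mathbf{u}^{(\gamma)}(z)$. Since every $\mathbf{u}^{(\gamma)}$ has the Lagrange form~(\ref{FFT_version2_algorthm_u}) over the common nodes $\{\beta_k\}_{k=1}^{K+T}$, the sum $\mathbf{q}$ is again of that form, with $j$-th data coefficient $\sum_{\gamma}\mathbf{x}_j^{(\gamma)}$, which equals $\mathbf{\hat{a}}^{(j)}$ by Lemma~\ref{lemma2}. Hence $\mathbf{q}$ is a degree-$(N-1)$ polynomial that is precisely a Lagrange coding polynomial in the sense of~(\ref{LCC_u}) for the secrets $\{\mathbf{\hat{a}}^{(j)}\}_{j=1}^{K}$. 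Observing moreover that $\mathbf{u}_\theta=\sum_{\gamma}\mathbf{u}^{(\gamma)}(\alpha_\theta)=\mathbf{q}(\alpha_\theta)$, Lemma~\ref{lemma1} applied to $\mathbf{q}$ yields $\sum_{\theta=1}^{N}\mathbf{u}_\theta\prod_{k=1,k\neq\theta}^{N}\frac{\beta_j-\alpha_k}{\alpha_\theta-\alpha_k}=\mathbf{\hat{a}}^{(j)}$ for $j\in\{1,\dots,K\}$.

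Next I would push $\mathscr{FFT}_{\mathcal{D'}}$ through the prover's decoding step~(\ref{prover_decoding_poly}). Because $\mathscr{FFT}_{\mathcal{D'}}$ is a linear map and the Lagrange weights $\prod_{k\neq\theta}\frac{\beta_j-\alpha_k}{\alpha_\theta-\alpha_k}$ are scalars,
\begin{align}
\mathbf{u}_p(\beta_j) = \sum_{\theta=1}^{N}\mathscr{FFT}_{\mathcal{D'}}(\mathbf{u}_\theta)\prod_{k=1,k\neq\theta}^{N}\frac{\beta_j-\alpha_k}{\alpha_\theta-\alpha_k} = \mathscr{FFT}_{\mathcal{D'}}\!\left(\sum_{\theta=1}^{N}\mathbf{u}_\theta\prod_{k=1,k\neq\theta}^{N}\frac{\beta_j-\alpha_k}{\alpha_\theta-\alpha_k}\right) = \mathscr{FFT}_{\mathcal{D'}}\!\left(\mathbf{\hat{a}}^{(j)}\right)
\end{align}
for each $j\in\{1,\dots,K\}$, so the step-10 formula becomes $\hat{\hat{a}}_i = \sum_{j=1}^{K}(\omega^i\eta)^{j-1}\big(\mathscr{FFT}_{\mathcal{D'}}(\mathbf{\hat{a}}^{(j)})\big)_{i\bmod\frac{n}{K}}$. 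It then remains to prove the decomposition identity $\big(\mathscr{FFT}_{\mathcal{D}}(\mathbf{\hat{a}})\big)_i=\sum_{j=1}^{K}(\eta\omega^i)^{j-1}\big(\mathscr{FFT}_{\mathcal{D'}}(\mathbf{\hat{a}}^{(j)})\big)_{i\bmod\frac{n}{K}}$. This I would do directly: the left-hand side is $\sum_{l=0}^{n-1}\hat{a}_l(\eta\omega^i)^l$; writing $l=Kt+(j-1)$ with $j\in\{1,\dots,K\}$ and $t\in\{0,\dots,\frac{n}{K}-1\}$ and factoring out $(\eta\omega^i)^{j-1}$, the inner sum becomes $\sum_{t}\hat{a}^{(j)}_t\big((\eta\omega^i)^K\big)^t$; since $\omega^n=1$ we have $(\eta\omega^i)^K=\eta^K\omega^{Ki}=\eta^K\omega^{K(i\bmod\frac{n}{K})}$, which is exactly the $(i\bmod\frac{n}{K})$-th node of $\mathcal{D'}=\{\eta^K\omega^{Ks}\}_{s=0}^{\frac{n}{K}-1}$, so the inner sum equals $\big(\mathscr{FFT}_{\mathcal{D'}}(\mathbf{\hat{a}}^{(j)})\big)_{i\bmod\frac{n}{K}}$. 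Comparing with the step-10 expression and recalling $\mathbf{\hat{a}}=\mathscr{FFT}_{\mathcal{S}}^{-1}(\mathbf{a})$ gives $\mathbf{\hat{\hat{a}}}=\mathscr{FFT}_{\mathcal{D}}(\mathscr{FFT}_{\mathcal{S}}^{-1}(\mathbf{a}))$, as claimed.

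I expect the main obstacle to be the index bookkeeping in this last identity: one must check that the stride-$K$ subsampling $\mathbf{\hat{a}}^{(j)}$ of $\mathbf{\hat{a}}$ lines up with the nodes of $\mathcal{D'}$ after reduction modulo $\frac{n}{K}$, and that $\mathscr{FFT}_{\mathcal{D'}}$ is understood relative to the primitive $\frac{n}{K}$-th root of unity $\omega^K$ (as it is in steps 4 and 8 of the algorithm). Everything else is just linearity of the Fourier transform and Lagrange interpolation, which are already packaged in Lemmas~\ref{lemma1}--\ref{lemma2}.
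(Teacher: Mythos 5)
Your proof is correct and follows essentially the same route as the paper's: unwind step 10 via the definition of $\mathbf{u}_p(z)$, push $\mathscr{FFT}_{\mathcal{D'}}$ through the Lagrange decoding sum by linearity, invoke Lemma~\ref{lemma1} and Lemma~\ref{lemma2} to recognize $\hat{\mathbf{a}}^{(j)}$, and finish with the radix-$K$ reassembly of $\mathscr{FFT}_{\mathcal{D}}$ from the $\mathscr{FFT}_{\mathcal{D'}}$ blocks. The only cosmetic differences are that you bundle the double sum over $\gamma,\theta$ into the auxiliary polynomial $\mathbf{q}(z)=\sum_\gamma\mathbf{u}^{(\gamma)}(z)$ before applying Lemma~\ref{lemma1} once (the paper applies it termwise and then sums and swaps), and you explicitly verify the final reassembly identity $\bigl(\mathscr{FFT}_{\mathcal{D}}(\mathbf{\hat{a}})\bigr)_i=\sum_{j=1}^{K}(\eta\omega^i)^{j-1}\bigl(\mathscr{FFT}_{\mathcal{D'}}(\mathbf{\hat{a}}^{(j)})\bigr)_{i\bmod\frac{n}{K}}$ including the reduction $\omega^{Ki}=\omega^{K(i\bmod\frac{n}{K})}$, which the paper dismisses as the definition of the Fourier transform (step (g)).
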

\begin{proof}
We have 
\begin{align}
\begin{split} 
\hat{\hat{a}}_i & \stackrel{(a)}{=} \sum_{j=1}^{K} \left (\omega^{i} \eta \right )^{(j-1)} \left( \mathbf{u}_{p}(\beta_j) \right )_{i \mod \frac{n}{K}}\\
& \stackrel{(b)}{=} \sum_{j=1}^{K} \left (\omega^{i} \eta \right )^{(j-1)} \left( 
\sum_{\theta = 1}^{N} \mathscr{FFT}_{\mathcal{D'}}(\mathbf{u}_\theta) \prod_{k = 1, k \neq \theta}^{N} \frac{\beta_{j} - \alpha_{k}}{\alpha_{\theta} - \alpha_{k}}
 \right )_{i \mod \frac{n}{K}}\\
& \stackrel{(c)}{=} \sum_{j=1}^{K} \left (\omega^{i} \eta \right )^{(j-1)} \left( \mathscr{FFT}_{\mathcal{D'}} \left (
\sum_{\theta = 1}^{N} \mathbf{u}_\theta \prod_{k = 1, k \neq \theta}^{N} \frac{\beta_{j} - \alpha_{k}}{\alpha_{\theta} - \alpha_{k}}
\right ) \right )_{i \mod \frac{n}{K}}\\
& \stackrel{(d)}{=} \sum_{j=1}^{K} \left (\omega^{i} \eta \right )^{(j-1)} \left( \mathscr{FFT}_{\mathcal{D'}} \left (
\sum_{\theta = 1}^{N} \sum_{\gamma=1}^{N} \mathbf{u}^{(\gamma)}(\alpha_\theta) \prod_{k = 1, k \neq \theta}^{N} \frac{\beta_{j} - \alpha_{k}}{\alpha_{\theta} - \alpha_{k}}
\right ) \right )_{i \mod \frac{n}{K}}\\
& \stackrel{}{=} \sum_{j=1}^{K} \left (\omega^{i} \eta \right )^{(j-1)} \left( \mathscr{FFT}_{\mathcal{D'}} \left (
\sum_{\gamma = 1}^{N} \sum_{\theta=1}^{N} \mathbf{u}^{(\gamma)}(\alpha_\theta)  \prod_{k = 1, k \neq \theta}^{N} \frac{\beta_{j} - \alpha_{k}}{\alpha_{\theta} - \alpha_{k}}
\right ) \right )_{i \mod \frac{n}{K}}\\
& \stackrel{(e)}{=} \sum_{j=1}^{K} \left (\omega^{i} \eta \right )^{(j-1)} \left( \mathscr{FFT}_{\mathcal{D'}} \left (
\sum_{\gamma = 1}^{N} \mathbf{x}_{j}^{(\gamma)}
\right ) \right )_{i \mod \frac{n}{K}}\\
& \stackrel{(f)}{=} \sum_{j=1}^{K} \left (\omega^{i} \eta \right )^{(j-1)} \left( \mathscr{FFT}_{\mathcal{D'}} \left (
\mathbf{\hat{a}}^{(j)}
\right ) \right )_{i \mod \frac{n}{K}}\\
& \stackrel{(g)}{=} \left ( \mathscr{FFT}_{\mathcal{D}} \left ( \mathbf{\hat{a}} \right ) \right )_{i} \\
& \stackrel{(h)}{=} \left ( \mathscr{FFT}_{\mathcal{D}} \left ( \mathscr{FFT}_{\mathcal{S}}^{-1} \left (\mathbf{a} \right) \right ) \right )_{i}, \\
\end{split}
\nonumber
\end{align}
where (a) is based on the definition of $\hat{\hat{\mathbf{a}}}$ in step 10 of Algorithm \ref{fft_MPC_v2}, (b) is based on the definition (\ref{prover_decoding_poly}) of $\mathbf{u}_{p}(z)$, (c) holds because of the linearity of Fourier transform, (d) follows from the definition of $\mathbf{u}_{\theta}$,  in step 7 of Algorithm \ref{fft_MPC_v2},  (e) follows from Lemma \ref{lemma1}, (f) follows from Lemma \ref{lemma2}, (g) follows from the definition of Fourier transform, and (h) follows from the definition of $\mathbf{\hat{a}}$.
\end{proof}

\subsubsection{Privacy}
To prove the privacy of Algorithm \ref{fft_MPC_v2}, we must show that if an arbitrary subset of at most $T$ servers collude, and share all the information they have, they cannot obtain any information about the vector $\mathbf{a}$. Without loss of generality, let us assume Servers $1,\dots,T$ collude. We note that Server $\theta$ receives $\mathbf{u}(\alpha_{\theta})$ and $\mathbf{u}^{(\gamma)}(\alpha_{\theta})$, for all $\gamma \in \{1,\dots,N\}$. As short hand notations, we define $\mathbf{U} \triangleq \left [\mathbf{u}(\alpha_{1}), \dots,\mathbf{u}(\alpha_{T}) \right ]$ where $\mathbf{u}(z)$ is defined in (\ref{FFT_first_algorthm_u}), and $\mathbf{\tilde{U}}$ as,
\begin{align}\label{priv_proof_al3_1_U}
& \mathbf{\tilde{U}} \triangleq 
\begin{bmatrix}
\mathbf{u}^{(1)}(\alpha_1) & \mathbf{u}^{(1)}({\alpha_2}) & \dots & \mathbf{u}^{(1)}({\alpha_T})\\ 
\mathbf{u}^{(2)}(\alpha_1) & \mathbf{u}^{(2)}({\alpha_2}) & \dots & \mathbf{u}^{(2)}({\alpha_T})\\ 
\dots & \dots & \dots & \dots\\ 
\mathbf{u}^{(N)}(\alpha_1) & \mathbf{u}^{(N)}(\alpha_2) & \dots & \mathbf{u}^{(N)}({\alpha_T})
\end{bmatrix}.
\end{align}

Thus to  prove  privacy, we need to show that $I \left (\mathbf{U}, \mathbf{\tilde{U}};\mathbf{a} \right) = 0$.

Let us define  matrices $\mathbf{A}$, $\mathbf{V}$, $\mathbf{P}$, $\mathbf{Q}$, and $\mathbf{\tilde{X}}$ as follows. 

Let $\mathbf{A} \triangleq \left [\mathbf{a}^{(1)}, \dots, \mathbf{a}^{(K)} \right]$, where $\mathbf{a}^{(1)}, \dots, \mathbf{a}^{(K)}$ are defined in step 1 of Algorithm \ref{fft_MPC_v2}, and $\mathbf{V} \triangleq \left [\mathbf{v}_{K+1}, \dots,\mathbf{v}_{K+T} \right]$ where $\mathbf{v}_{K+1}, \dots, \mathbf{v}_{K+T}$ are random vectors that are chosen randomly in step 2 of Algorithm \ref{fft_MPC_v2}. We also define constant $\mathbf{P} = \left [p_{i,j} \right]_{i=1,j =1}^{K,T}$ and $\mathbf{Q}= \left [q_{i,j} \right]_{i=K+1,j =1}^{K+T,T}$ as,
\begin{align}\label{priv_proof_3}
\begin{split}
& p_{i,j} \triangleq \prod_{k = 1, k \neq i}^{K+T} \frac{\alpha_j - \beta_{k}}{\beta_{i} - \beta_{k}} \; \; \textup{for} \; \;  i = 1, \dots, K , \: j = 1,\dots,T,\\
& q_{i,j} \triangleq \prod_{k = 1, k \neq i}^{K+T} \frac{\alpha_j - \beta_{k}}{\beta_{i} - \beta_{k}} \; \; \textup{for} \; \;  i = K+1, \dots,K+T , \: 
j = 1,\dots,T.
\end{split}
\end{align}

According to (\ref{FFT_version2_algorthm_u}), we can write $\mathbf{U}$ as, 
\begin{align}\label{priv_proof_2}
\mathbf{U} = \mathbf{A} \mathbf{P} + \mathbf{V} \mathbf{Q}. 
\end{align}

We note that matrix $\mathbf{Q}$ is a full rank matrix. This is because by multiplying the $i$th row of the matrix $\mathbf{Q}$ by a constant non-zero number $\prod_{k = 1, k \neq i}^{K+T} \beta_{i} - \beta_{k}$ and dividing the $j$th column of $\mathbf{Q}$ by constant non-zero number $\prod_{k = 1}^{K+T} \alpha_j - \beta_{k}$, for $i \in \{K+1, \dots, K+T \}, j \in \{1, \dots, T \}$, we will reach to a square Cauchy matrix, which is full rank. 

Considering (\ref{definition_x}), $\mathbf{\tilde{X}}$ is defined as,
\begin{align}\label{priv_proof_al3_1}
\mathbf{\tilde{X}} \triangleq
\begin{bmatrix}
\mathbf{x}_{1}^{(1)} &\mathbf{x}_{2}^{(1)} & \dots & \mathbf{x}_{K}^{(1)}\\ 
\mathbf{x}_{1}^{(2)} &\mathbf{x}_{2}^{(2)} & \dots & \mathbf{x}_{K}^{(2)}\\
\dots & \dots & \dots& \dots\\ 
\mathbf{x}_{1}^{(N)} & \mathbf{x}_{2}^{(N)} &\dots & \mathbf{x}_{K}^{(N)}
\end{bmatrix}.
\end{align}


\begin{lemma} \label{Markov99} $I \left (\mathbf{U} ; \mathbf{A} \right) = 0$.
\end{lemma}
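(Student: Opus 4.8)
The plan is to exploit the algebraic identity \eqref{priv_proof_2}, namely $\mathbf{U} = \mathbf{A}\mathbf{P} + \mathbf{V}\mathbf{Q}$, together with the fact, already established in the text, that the $T\times T$ matrix $\mathbf{Q}$ is full rank (hence invertible over $\mathbb{F}$). The key observation is that the ``masking'' term $\mathbf{V}\mathbf{Q}$ is a uniformly random matrix over $\mathbb{F}^{\frac{n}{K}\times T}$ that is independent of $\mathbf{A}$, so that $\mathbf{U}$ is a deterministic shift of a uniform random matrix and therefore carries no information about $\mathbf{A}$.

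Concretely, I would proceed as follows. First, recall that $\mathbf{P}$ and $\mathbf{Q}$ are constants (determined by the public points $\{\alpha_\theta\},\{\beta_j\}$), and that the columns $\mathbf{v}_{K+1},\dots,\mathbf{v}_{K+T}$ of $\mathbf{V}$ are chosen independently and uniformly at random from $\mathbb{F}^{\frac{n}{K}}$ in step 2 of Algorithm~\ref{fft_MPC_v2}, hence $\mathbf{V}$ is uniform on $\mathbb{F}^{\frac{n}{K}\times T}$ and independent of $\mathbf{A}$. Second, because $\mathbf{Q}$ is invertible, the map $\mathbf{W}\mapsto \mathbf{W}\mathbf{Q}$ is a bijection of $\mathbb{F}^{\frac{n}{K}\times T}$ onto itself; applying this bijection to the uniform random matrix $\mathbf{V}$ shows that $\mathbf{V}\mathbf{Q}$ is again uniform on $\mathbb{F}^{\frac{n}{K}\times T}$ and still independent of $\mathbf{A}$. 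Third, fix any realization $\mathbf{A}=\mathbf{a}$; then the conditional distribution of $\mathbf{U}$ given $\mathbf{A}=\mathbf{a}$ is the distribution of $\mathbf{a}\mathbf{P}+\mathbf{V}\mathbf{Q}$, i.e. a fixed translate of a uniform random matrix, which is uniform on $\mathbb{F}^{\frac{n}{K}\times T}$ and in particular does not depend on $\mathbf{a}$. Since the conditional law of $\mathbf{U}$ given $\mathbf{A}$ is the same for every value of $\mathbf{A}$, the random variables $\mathbf{U}$ and $\mathbf{A}$ are independent, and therefore $I(\mathbf{U};\mathbf{A}) = 0$.

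I do not anticipate a genuine obstacle here; the argument is a standard one-time-pad / random-masking style proof. The only point that needs a little care is the justification that multiplication on the right by the invertible constant matrix $\mathbf{Q}$ maps the uniform distribution to the uniform distribution (a counting/bijection argument over the finite field), and the explicit observation that $\mathbf{V}$, being generated by fresh independent randomness, is independent of $\mathbf{A}$, so that adding the constant-in-law term $\mathbf{A}\mathbf{P}$ cannot destroy uniformity of the sum conditioned on $\mathbf{A}$. Everything else is bookkeeping, and the conclusion $I(\mathbf{U};\mathbf{A})=0$ follows immediately from the definition of mutual information once independence is in hand.
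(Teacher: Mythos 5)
Your proposal is correct and rests on exactly the same ingredients as the paper's proof: the decomposition $\mathbf{U} = \mathbf{A}\mathbf{P} + \mathbf{V}\mathbf{Q}$ from \eqref{priv_proof_2}, the invertibility of $\mathbf{Q}$, and the fact that $\mathbf{V}$ is uniform on $\mathbb{F}^{\frac{n}{K}\times T}$ and independent of $\mathbf{A}$. The only difference is presentational: the paper runs an entropy chain ($I(\mathbf{U};\mathbf{A}) = H(\mathbf{U}) - H(\mathbf{V}\mathbf{Q}) \leq 0$, then invokes nonnegativity), whereas you argue directly that the conditional law of $\mathbf{U}$ given $\mathbf{A}=\mathbf{a}$ is uniform and hence does not depend on $\mathbf{a}$; both are standard one-time-pad style arguments and are equivalent.
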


\begin{proof}
we have
\begin{align}
\begin{split}
I \left (\mathbf{U};\mathbf{A} \right) & \stackrel{(a)}{=} H(\mathbf{U}) - H(\mathbf{U} | \mathbf{A})\\
& \stackrel{(b)}{=} H(\mathbf{U}) -  H(\mathbf{A} \mathbf{P} + \mathbf{V} \mathbf{Q} | \mathbf{A}) \\
& \stackrel{(c)}{=} H(\mathbf{U}) -  H(\mathbf{V} \mathbf{Q} | \mathbf{A})\\ 
& \stackrel{(d)}{=} H(\mathbf{U}) -  H(\mathbf{V} \mathbf{Q})\\
& \stackrel{(e)}{=} H(\mathbf{U}) -  \frac{n}{K} T \log (|\mathbb{F}|)\\
& \stackrel{(f)}{\leq} 0,
\end{split}
\nonumber
\end{align}
where (a) follows from the definition of the mutual information, (b) relies on (\ref{priv_proof_2}), (c) holds because matrix $\mathbf{P}$ is a constant matrix, (d) holds because $\mathbf{V}$ is chosen independently of matrix $\mathbf{A}$, (e) holds because  matrix $\mathbf{V}$ has uniform distribution over $\mathbb{F}^{\frac{n}{K}\times T}$, and matrix $\mathbf{Q}$ is full rank, and (f) holds because $\mathbf{U} \in \mathbb{F}^{\frac{n}{K}\times T}$, so we have $H(\mathbf{U}) \leq \frac{n}{K} \times T \times \log (|\mathbb{F}|)$.

Since  mutual information is always non-negative, thus $I \left (\mathbf{U};\mathbf{A} \right) \leq 0$ means   $I\left (\mathbf{U};\mathbf{A} \right) = 0$.
\end{proof}

\begin{lemma}\label{Markov100}
Given $\mathbf{U}$,   $\mathbf{A} \rightarrow \mathbf{\tilde{X}} \rightarrow \mathbf{\tilde{U}}$ is a Markov chain.
\end{lemma}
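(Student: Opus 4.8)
The plan is to establish the conditional independence $I\!\left(\mathbf{A};\mathbf{\tilde{U}}\mid\mathbf{U},\mathbf{\tilde{X}}\right)=0$, which is exactly the assertion that $\mathbf{A}\to\mathbf{\tilde{X}}\to\mathbf{\tilde{U}}$ is a Markov chain given $\mathbf{U}$. The idea is to exhibit $\mathbf{\tilde{U}}$ as a deterministic function of $\mathbf{\tilde{X}}$ together with fresh randomness that is independent of $(\mathbf{A},\mathbf{U},\mathbf{\tilde{X}})$, and then invoke the usual data-processing argument.

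First I would isolate the randomness that produces $\mathbf{\tilde{U}}$. For each $\gamma\in\{1,\dots,N\}$, the re-sharing polynomial $\mathbf{u}^{(\gamma)}(z)$ in \eqref{FFT_version2_algorthm_u} is assembled from exactly two ingredients: the messages $\{\mathbf{x}_{j}^{(\gamma)}\}_{j=1}^{K}$, which are precisely the entries of the $\gamma$th row of $\mathbf{\tilde{X}}$ (see \eqref{priv_proof_al3_1}), and the padding vectors $\{\mathbf{v}_{j}^{(\gamma)}\}_{j=K+1}^{K+T}$ drawn uniformly and independently in step 5 of Algorithm~\ref{fft_MPC_v2}. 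Writing $\mathbf{V}'\triangleq\{\mathbf{v}_{j}^{(\gamma)}:\gamma\in\{1,\dots,N\},\,j\in\{K+1,\dots,K+T\}\}$ and recalling that $\mathbf{\tilde{U}}$ is obtained by evaluating the polynomials $\mathbf{u}^{(\gamma)}(z)$ at the public points $\alpha_{1},\dots,\alpha_{T}$, there is a deterministic map $g$, depending only on the public points $\{\alpha_{\theta}\}$ and $\{\beta_{j}\}$, with $\mathbf{\tilde{U}}=g\!\left(\mathbf{\tilde{X}},\mathbf{V}'\right)$.

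Next I would argue that $\mathbf{V}'$ is independent of the triple $(\mathbf{A},\mathbf{U},\mathbf{\tilde{X}})$. Each $\mathbf{v}_{j}^{(\gamma)}$ is sampled freshly by Server $\gamma$ independently of its received share $\mathbf{u}(\alpha_{\gamma})$ and of all the first-round randomness $\mathbf{V}$; moreover $\mathbf{A}$, $\mathbf{U}$, and every row $\mathbf{x}_{j}^{(\gamma)}$ of $\mathbf{\tilde{X}}$ are functions of $\mathbf{A}$, $\mathbf{V}$ and $\{\mathbf{u}(\alpha_{\gamma})\}_{\gamma=1}^{N}$ alone --- indeed, by \eqref{definition_x} the $\gamma$th row of $\mathbf{\tilde{X}}$ is a function of $\mathbf{u}(\alpha_{\gamma})$ only --- so $\mathbf{V}'$ is independent of $(\mathbf{A},\mathbf{U},\mathbf{\tilde{X}})$. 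Finally, the standard fact that $\mathbf{\tilde{U}}=g(\mathbf{\tilde{X}},\mathbf{V}')$ with $\mathbf{V}'\perp(\mathbf{A},\mathbf{U},\mathbf{\tilde{X}})$ forces $p\!\left(\mathbf{\tilde{U}}\mid\mathbf{A},\mathbf{U},\mathbf{\tilde{X}}\right)=p\!\left(\mathbf{\tilde{U}}\mid\mathbf{U},\mathbf{\tilde{X}}\right)$, hence $I\!\left(\mathbf{A};\mathbf{\tilde{U}}\mid\mathbf{U},\mathbf{\tilde{X}}\right)=0$, which is the asserted Markov chain.

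The main obstacle is the bookkeeping in the second step: one must verify carefully that $\mathbf{\tilde{X}}$ really absorbs all the dependence of $\mathbf{\tilde{U}}$ on the data --- in particular that the shares $\mathbf{u}(\alpha_{T+1}),\dots,\mathbf{u}(\alpha_{N})$ held by the non-colluding servers enter $\mathbf{\tilde{U}}$ only through the corresponding rows of $\mathbf{\tilde{X}}$ --- so that the residual inputs to $g$ are exactly the fresh, data-independent vectors $\mathbf{V}'$. Once the ``state'' $\mathbf{\tilde{X}}$ and the residual randomness $\mathbf{V}'$ have been correctly identified, the Markov property is immediate.
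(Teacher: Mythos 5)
Your proposal is correct, but it proves the Markov property from the opposite ``link'' of the chain than the paper does. You identify the fresh server randomness $\mathbf{V}'=\{\mathbf{v}_{j}^{(\gamma)}\}$ from step~5, observe from \eqref{FFT_version2_algorthm_u} that $\mathbf{\tilde{U}}=g(\mathbf{\tilde{X}},\mathbf{V}')$ for a deterministic public $g$, argue $\mathbf{V}'\perp(\mathbf{A},\mathbf{U},\mathbf{\tilde{X}})$ because each $\mathbf{v}_{j}^{(\gamma)}$ is drawn freshly after $\mathbf{A}$, $\mathbf{V}$, and all first-round shares are fixed, and conclude $p(\mathbf{\tilde{U}}\mid\mathbf{A},\mathbf{U},\mathbf{\tilde{X}})=p(\mathbf{\tilde{U}}\mid\mathbf{U},\mathbf{\tilde{X}})$. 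The paper instead works on the \emph{first} transition: it shows $H(\mathbf{A}\mid\mathbf{\tilde{X}},\mathbf{U})=0$ because $\mathbf{A}$ is a deterministic function of $\mathbf{\tilde{X}}$ (by Lemma~\ref{lemma2}, summing the rows of $\mathbf{\tilde{X}}$ recovers $\hat{\mathbf{a}}^{(j)}$, hence $\hat{\mathbf{a}}$, hence $\mathbf{a}$), and then $I(\mathbf{A};\mathbf{\tilde{U}}\mid\mathbf{\tilde{X}},\mathbf{U})=-H(\mathbf{A}\mid\mathbf{\tilde{U}},\mathbf{\tilde{X}},\mathbf{U})\le 0$ forces equality. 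Both are valid; the paper's argument is shorter but leans crucially on the special fact that the re-sharing matrix $\mathbf{\tilde{X}}$ is a \emph{lossless} encoding of $\mathbf{A}$. Your argument is slightly more robust and closer to the ``operational'' reason the chain holds: it would still go through in a variant of the protocol where the shared state did not determine $\mathbf{A}$, as long as the downstream transmissions are generated from that state plus fresh coins, and it exactly isolates where the padding vectors $\mathbf{v}_{j}^{(\gamma)}$ enter.
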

\begin{proof}
\begin{align}
\begin{split}
I \left (\mathbf{A};\mathbf{\tilde{U}}| \mathbf{\tilde{X}}, \mathbf{U} \right) & \stackrel{}{=} H \left (\mathbf{A} | \mathbf{\tilde{X}}, \mathbf{U} \right) - H \left (\mathbf{A} |\mathbf{\tilde{U}}, \mathbf{\tilde{X}}, \mathbf{U} \right)\\
& \stackrel{(a)}{=} - H \left (\mathbf{A} |\mathbf{\tilde{U}}, \mathbf{\tilde{X}}, \mathbf{U} \right), \\
\end{split}
\nonumber
\end{align}
where (a) is because if someone has $\mathbf{\tilde{X}}$ and globally known parameters $\omega, n, K, \{\alpha_\theta \}_{\theta = 1}^{N}, \{\beta_j \}_{j = 1}^{K}$, he can perform the roles of the servers and the prover in Algorithm \ref{fft_MPC_v2} from step 5 to the end and calculate $\mathbf{\hat{\hat{a}}}$. So he can calculate $ \mathbf{a} = \mathscr{FFT}_{\mathcal{S}} \left (\mathbf{\mathscr{FFT}_{\mathcal{D}}^{-1} (\mathbf{\hat{\hat{a}}})} \right )$. So $\mathbf{a}$ (or equivalently $\mathbf{A}$) can be recovered from $\mathbf{\tilde{X}}$. 

Finally, since the entropy function $H$ and mutual information $I$ are always non-negative, we conclude $I \left (\mathbf{A};\mathbf{U}| \mathbf{\tilde{X}}, \mathbf{U} \right) = 0$. 
\end{proof}

Theorem \ref{privacy_theorem} establishes the  privacy of algorithm. 

\begin{theorem} \label{privacy_theorem}
$I \left (\mathbf{U}, \mathbf{\tilde{U}};\mathbf{a} \right) = 0$.
\end{theorem}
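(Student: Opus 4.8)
The plan is to peel off the two rounds of Lagrange sharing one at a time with the chain rule for mutual information. Since $\mathbf{a}$ and the block matrix $\mathbf{A}=[\mathbf{a}^{(1)},\dots,\mathbf{a}^{(K)}]$ are related by the fixed reshuffling of Step~1 of Algorithm~\ref{fft_MPC_v2}, we have $I(\mathbf{U},\mathbf{\tilde{U}};\mathbf{a})=I(\mathbf{U},\mathbf{\tilde{U}};\mathbf{A})$, so it suffices to show the latter is zero. Writing
\begin{align}
I(\mathbf{U},\mathbf{\tilde{U}};\mathbf{A}) = I(\mathbf{U};\mathbf{A}) + I(\mathbf{\tilde{U}};\mathbf{A}\mid\mathbf{U}),
\nonumber
\end{align}
the first term is zero by Lemma~\ref{Markov99}, so the whole statement reduces to the conditional term $I(\mathbf{\tilde{U}};\mathbf{A}\mid\mathbf{U})$, which captures whatever the colluders learn from the \emph{second} round of messages once their first-round shares are fixed.

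For that term I would introduce the second-round secret matrix $\mathbf{\tilde{X}}$ of \eqref{priv_proof_al3_1} as an intermediate variable and write
\begin{align}
I(\mathbf{\tilde{U}};\mathbf{A}\mid\mathbf{U}) \;\le\; I(\mathbf{\tilde{U}};\mathbf{A},\mathbf{\tilde{X}}\mid\mathbf{U}) \;=\; I(\mathbf{\tilde{U}};\mathbf{\tilde{X}}\mid\mathbf{U}) + I(\mathbf{\tilde{U}};\mathbf{A}\mid\mathbf{\tilde{X}},\mathbf{U}),
\nonumber
\end{align}
where the inequality only adds $\mathbf{\tilde{X}}$ to the observed side and the equality is the chain rule. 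The last term vanishes by the Markov chain $\mathbf{A}\to\mathbf{\tilde{X}}\to\mathbf{\tilde{U}}$ (given $\mathbf{U}$) of Lemma~\ref{Markov100} --- equivalently, because $\mathbf{A}$ is a deterministic function of $\mathbf{\tilde{X}}$, as that proof records. Hence everything collapses to the single claim $I(\mathbf{\tilde{U}};\mathbf{\tilde{X}}\mid\mathbf{U})=0$: the second resharing round leaks nothing about its own inputs, even after $\mathbf{U}$ is revealed.

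To establish $I(\mathbf{\tilde{U}};\mathbf{\tilde{X}}\mid\mathbf{U})=0$ I would replay, one level up, the entropy count of Lemma~\ref{Markov99}. For each server $\gamma$, the $\gamma$-th row of $\mathbf{\tilde{U}}$ is the list of $T$ evaluations $\mathbf{u}^{(\gamma)}(\alpha_1),\dots,\mathbf{u}^{(\gamma)}(\alpha_T)$ of the degree-$(N-1)$ Lagrange polynomial \eqref{FFT_version2_algorthm_u} that shares the $\gamma$-th row $\mathbf{X}^{(\gamma)}=[\mathbf{x}_1^{(\gamma)},\dots,\mathbf{x}_K^{(\gamma)}]$ of $\mathbf{\tilde{X}}$ using the $T$ fresh masks $\mathbf{V}^{(\gamma)}=[\mathbf{v}_{K+1}^{(\gamma)},\dots,\mathbf{v}_{K+T}^{(\gamma)}]$, so that $[\mathbf{u}^{(\gamma)}(\alpha_1),\dots,\mathbf{u}^{(\gamma)}(\alpha_T)]=\mathbf{X}^{(\gamma)}\mathbf{P}+\mathbf{V}^{(\gamma)}\mathbf{Q}$ with exactly the constant matrices $\mathbf{P},\mathbf{Q}$ of \eqref{priv_proof_2}--\eqref{priv_proof_3} and $\mathbf{Q}$ full rank (a rescaled Cauchy matrix, as in the proof of Lemma~\ref{Markov99}). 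Since the masks $\mathbf{V}'\triangleq\{\mathbf{V}^{(\gamma)}\}_{\gamma=1}^{N}$ are chosen uniformly and jointly independently in Step~5, independently of the first round, hence of $(\mathbf{A},\mathbf{V})$ and therefore of $(\mathbf{\tilde{X}},\mathbf{U})$, conditioning on $(\mathbf{\tilde{X}},\mathbf{U})$ leaves $\mathbf{\tilde{U}}$ uniform over its whole ambient space: $H(\mathbf{\tilde{U}}\mid\mathbf{\tilde{X}},\mathbf{U})=\frac{n}{K}NT\log|\mathbb{F}|\ge H(\mathbf{\tilde{U}})$. Thus $I(\mathbf{\tilde{X}},\mathbf{U};\mathbf{\tilde{U}})\le 0$, so it is $0$, and in particular $I(\mathbf{\tilde{U}};\mathbf{\tilde{X}}\mid\mathbf{U})=0$, which finishes the proof. (Alternatively one can skip the chain-rule bookkeeping entirely: conditioned on $\mathbf{A}$, the map $(\mathbf{V},\mathbf{V}')\mapsto(\mathbf{U},\mathbf{\tilde{U}})$ is injective --- recover $\mathbf{V}$ from $\mathbf{U}=\mathbf{A}\mathbf{P}+\mathbf{V}\mathbf{Q}$, then $\mathbf{\tilde{X}}$, then $\mathbf{V}'$ --- so $H(\mathbf{U},\mathbf{\tilde{U}}\mid\mathbf{A})=\frac{n}{K}(N+1)T\log|\mathbb{F}|\ge H(\mathbf{U},\mathbf{\tilde{U}})$ at once.)

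The step I expect to be the real obstacle is the claim $I(\mathbf{\tilde{U}};\mathbf{\tilde{X}}\mid\mathbf{U})=0$. One must verify that the honest and the colluding servers' round-two masks are jointly uniform and independent of $(\mathbf{A},\mathbf{V})$, pin down the correct full-rank (rescaled-Cauchy) structure of $\mathbf{Q}$ now that the same $T$ points $\alpha_1,\dots,\alpha_T$ are reused across all $N$ reshares, and line up the field dimensions so the entropy inequality is tight enough to close; the remainder is chain-rule bookkeeping on top of Lemmas~\ref{Markov99} and~\ref{Markov100}.
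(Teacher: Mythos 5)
Your proof is correct and follows essentially the same route as the paper: the same reduction $I(\mathbf{U},\tilde{\mathbf{U}};\mathbf{a})=I(\mathbf{U},\tilde{\mathbf{U}};\mathbf{A})$, the same chain-rule split, Lemma~\ref{Markov99} to kill $I(\mathbf{U};\mathbf{A})$, Lemma~\ref{Markov100} to pass from $\mathbf{A}$ to $\tilde{\mathbf{X}}$ (you state the data-processing step as adding $\tilde{\mathbf{X}}$ and invoking the Markov chain, which is equivalent to the paper's inequality (d)), and the same entropy count on $\tilde{\mathbf{U}}=\tilde{\mathbf{X}}\mathbf{P}+\tilde{\mathbf{V}}\mathbf{Q}$ with $\mathbf{Q}$ full rank and $\tilde{\mathbf{V}}$ uniform and independent of $(\tilde{\mathbf{X}},\mathbf{U})$. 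Your parenthetical direct injectivity argument is a valid and somewhat slicker shortcut, but the main line matches the paper's proof.
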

\begin{proof}
We have 
\begin{align}
\begin{split}
I \left (\mathbf{U}, \mathbf{\tilde{U}};\mathbf{a} \right) & \stackrel{(a)}{=}  I \left (\mathbf{U}, \mathbf{\tilde{U}};\mathbf{A} \right)\\
& \stackrel{(b)}{=} I \left (\mathbf{U};\mathbf{A} \right) + I \left (\mathbf{\tilde{U}};\mathbf{A} | \mathbf{U} \right)\\
& \stackrel{(c)}{=} I \left (\mathbf{\tilde{U}};\mathbf{A} | \mathbf{U} \right)\\
& \stackrel{(d)}{\leq} I \left (\mathbf{\tilde{U}};\mathbf{\tilde{X}} | \mathbf{U} \right) \\
& \stackrel{(e)}{=} H(\mathbf{\tilde{U}} | \mathbf{U}) - H(\mathbf{\tilde{U}} | \mathbf{\tilde{X}}, \mathbf{U}) \\
& \stackrel{(f)}{=} H(\mathbf{\tilde{U}} | \mathbf{U}) - H(\mathbf{\tilde{X}} \mathbf{P} + \mathbf{\tilde{V}} \mathbf{Q} | \mathbf{\tilde{X}}, \mathbf{U})\\
& \stackrel{(g)}{=} H(\mathbf{\tilde{U}} | \mathbf{U}) - H(\mathbf{\tilde{V}} \mathbf{Q} | \mathbf{\tilde{X}}, \mathbf{U})\\
& \stackrel{(h)}{=} H(\mathbf{\tilde{U}} | \mathbf{U}) - H(\mathbf{\tilde{V}} \mathbf{Q}) \\
& \stackrel{(i)}{\leq} H(\mathbf{\tilde{U}}) - H(\mathbf{\tilde{V}} \mathbf{Q}) \\
& \stackrel{(j)}{=} H(\mathbf{\tilde{U}}) -  \frac{Nn}{K}T  \log (|\mathbb{F}|)\\
& \stackrel{(k)}{\leq} 0,
\end{split}
\nonumber
\end{align}
where (a) holds because $\mathbf{A}$ is made up of partitions of $\mathbf{a}$, (b) is based on definition of the mutual information, (c) relies on Lemma \ref{Markov99}, (d) is obtained from Lemma \ref{Markov100}, (e) is based on the definition of mutual information, (f) holds because of the equation $\mathbf{\tilde{U}} = \mathbf{\tilde{X}} \mathbf{P} + \mathbf{\tilde{V}} \mathbf{Q}$ which is the result of (\ref{FFT_version2_algorthm_u}), (g) holds because $\mathbf{P}$ is a constant known matrix,  (h) holds because $\mathbf{Q}$ is a known constant matrix, and matrix $\mathbf{\tilde{V}}$ is independent of $\mathbf{\tilde{X}}$ and $\mathbf{U}$, (i) holds because elimination of the condition doesn't decrease the entropy, (j) follows from $H(\mathbf{\tilde{V}} \mathbf{Q}) = \frac{Nn}{K}  T  \log (|\mathbb{F}|)$ which is because $\mathbf{Q}$ is a full rank matrix, and $\mathbf{\tilde{V}}$ has uniform distribution on $\mathbb{F}^{\frac{N.n}{K} \times T}$, and (k) is because  $\mathbf{\tilde{U}}  \in \mathbb{F}^{\frac{N.n}{K}\times T}$. 

Since  mutual information is non-negative, then, from above, we conclude  $I \left (\mathbf{U}, \mathbf{\tilde{U}};\mathbf{A} \right) = 0$. 
\end{proof}
\subsubsection{Computation Complexity}
In this part we explore the computation complexity of Algorithm \ref{fft_MPC_v2}. The computation complexity of each step in this algorithm is as follows:
\begin{itemize}
\item Step 3 has the cost of evaluating $\mathbf{u}(z)$, a polynomial of degree $N-1$, at $N$ points. This has the complexity of $O(\frac{n}{K}N \; \log(N))$ 
\item Step 4 has the cost of computing inverse Fourier transform of a vector of the length $\frac{n}{K}$ with the computation complexity of $O(\frac{n}{K} \; \log (\frac{n}{K}))$. 
\item Step 6 has the cost of evaluating $\mathbf{u}^{(\theta)}(z)$, a polynomial of degree $N-1$, at $N$ points.

 But this step has an extra cost for calculation of each entry in the vector $\mathbf{x}_{j}^{(\theta)}$. It requires $O(K)$ multiplication operations. So this step has the complexity of order $O(\frac{n}{K}(N \log(N) + K))$.
 
\item Step 7 includes $N\frac{n}{K}$ summation operations.
\item Step 8 has the cost of computing Fourier transform of a vector of the length $\frac{n}{K}$ with computation complexity of  $O(\frac{n}{K} \; \log (\frac{n}{K}))$.
\item Step 9 has the cost of evaluating $\mathbf{u}_{p}(z)$, a polynomial of degree $N-1$, at $N$ points with the computation complexity of  $O \left (\frac{n}{K} \left ( N \; \log(N) \right ) \right )$. 
\item Step 10 has the computation complexity of order $O(Kn)$.
\end{itemize}
Steps 4-8 are done by each server. So the computation complexity of each server is of the order $O \left (\frac{n}{K} \log \frac{n}{K} \right )$. Steps 1-3,9,10 are done by the prover. So the computation complexity of the prover is dominated by $O(Kn)$.
\subsection{Main algorithm}
\label{H_algorithm}
In the previous subsection we show how to develop a MPC scheme for the computation of $\mathscr{FFT}_{\mathcal{D}} \left (\mathbf{\mathscr{FFT}_{\mathcal{S}}^{-1} (\mathbf{a})} \right )$, using  $K+T$ servers. In Appendix \ref{Proposed_Algorithm1}, we show how to do the same thing for $\mathbf{\hat{a}} = \mathscr{FFT}_{\mathcal{D}}^{-1}(\mathbf{a})$. We can put them together to develop Algorithm \ref{Main_algorithm}. Note that in this paper we focus on a version of QAP-based zkSNARK, proposed by Groth \cite{groth2016size}. However, this scheme can be adopted for other QAP-based zkSNARKs \cite{ben2017scalable, gennaro2013quadratic, groth2016size, parno2013pinocchio, lipmaa2013succinct, danezis2014square}.

\begin{algorithm}[!htbp]
\setstretch{1.75}
\caption{Multiparty algorithm for proof generation}
\label{Main_algorithm}
\begin{algorithmic}[1]
\Statex
\textbf{Input}: $\mathcal{EK}$, function $F$ and its inputs.
\State \hskip1em
Convert $F$ into arithmetic circuit and build QAP just like the setup phase.
\State \hskip1em
Compute the values of all wires in the arithmetic circuit.
		\State \textbf{function} $\mathsf{polynomial-division}$ ($P(x)$, $T(x)$)
		\State \hskip1em Let $\mathbf{a} = \left [a_j \right ]_{j = 0}^{n-1}$, where $a_j = L(\omega^j)$. 
		\State \hskip1em Let $\mathbf{b} = \left [b_j \right ]_{j = 0}^{n-1}$, where $b_j = R(\omega^j)$. 
		\State \hskip1em Let $\mathbf{c} = \left [c_j \right ]_{j = 0}^{n-1}$, where $c_j = O(\omega^j)$.
		\State \hskip1em Calculate $\mathbf{{a}''} = \mathscr{FFT}_{\mathcal{D}}\left(\mathscr{FFT}_{\mathcal{S}}^{-1}(\mathbf{a}) \right)$ using Algorithm \ref{fft_MPC_v2}. 
		\State \hskip1em Calculate $\mathbf{{b}''} = \mathscr{FFT}_{\mathcal{D}}\left(\mathscr{FFT}_{\mathcal{S}}^{-1}(\mathbf{b}) \right)$ using Algorithm \ref{fft_MPC_v2}.
		\State \hskip1em Calculate $\mathbf{{c}''} = \mathscr{FFT}_{\mathcal{D}}\left(\mathscr{FFT}_{\mathcal{S}}^{-1}(\mathbf{c}) \right)$ using Algorithm \ref{fft_MPC_v2}.
		\State \hskip1em Calculate $\left [h_j \right ]_{j=0}^{n-1}$, where $h_j = ({a_j}''.{b_j}''-{c_j}'') \div T(\eta \omega^{j})$.
		\State \hskip1em Calculate $\left [{h}'_j \right]_{j=0}^{n-1} = \mathscr{FFT}_{\mathcal{D}}^{-1}(\left [h_j \right ]_{j=0}^{n-1})$ using Algorithm \ref{fft1_MPC}.
		\State \textbf{return} $\left [{h}'_j \right]_{j=0}^{n-1}$

		\State \textbf{function} $\mathsf{compute-proof}$
		\State \hskip1em Choose secret parameters $r, q$ independently and uniformly at random from $\mathbb{F}$.
		\State \hskip1em Calculate $\llbracket L_r \rrbracket_1 = \llbracket \alpha \rrbracket_1 + \sum_{i = 0}^{m} W_i \llbracket L_i(s) \rrbracket_1 + r \llbracket \delta \rrbracket_1$ where $W_i$ is the value carried by wire $i$.
		\State \hskip1em Calculate $ \llbracket R_q \rrbracket _1 = \llbracket \beta \rrbracket_1 + \sum_{i = 0}^{m} W_i  \llbracket R_i(s) \rrbracket_1 + q \llbracket \delta \rrbracket_1$
		\State \hskip1em Calculate $ \llbracket R_q \rrbracket_2 = \llbracket \beta \rrbracket_2 + \sum_{i = 0}^{m} W_i \llbracket R_i(s)]_2 + q \llbracket \delta \rrbracket_2$
		\State \hskip1em Calculate $ \llbracket K_{r,q} \rrbracket_1 = q \llbracket L_r \rrbracket_1 + r \llbracket R_q \rrbracket_1 - rq \llbracket \delta \rrbracket_1 + \sum_{i \in \mathcal{I}_{mid}} W_i \llbracket k_i^{pk} \rrbracket_1 + \sum_{j=0}^{n-2} h'_j \llbracket t_j \rrbracket_1$
		\State \textbf{return} $\pi = \{ \llbracket L_r \rrbracket_1, \llbracket R_q \rrbracket_2, \llbracket K_{r,q} \rrbracket_1 \}$
		\Statex
		\textbf{Outputs}: $\pi$, public inputs, and public outputs of $F$.
\end{algorithmic}
\end{algorithm}

\subsubsection{Privacy}
Since in each of the algorithms that we use in the main algorithm we use independent random vectors to share the secret input, one can easily show that even if we put all of these algorithms together, it does not leak any information to any subset of $T$ colluding servers.
\subsubsection{Computation complexity}
In Algorithm \ref{Main_algorithm}, the servers participate in the execution of steps 7, 8, 9, 11. So according to Subsection \ref{Proposed_Algorithm2} and Appendix \ref{Proposed_Algorithm1}, the computation complexity in each server is equal to $O(\frac{n}{K} \; \log (\frac{n}{K}))$. 

The computation complexity of the prover in steps 7, 8, 9, 11 is $O(Kn)$. Recall from Section \ref{Groth_prover} that the computation complexity of the prover in steps 1-6 and 10 is $O(n)$, and in steps 14-18 is $O(m\kappa)$. Assuming that $\kappa$, the security parameter, is not too large, and $m$, the number of the wires in the  arithmetic circuit, be at the same order of $n$, which is usually the case, the term $O(Kn)$ is the dominant term. So we can say that the overall computation complexity of the prover is equal to $O(Kn)$.



\section{Discussion and conclusion}
\label{Discussion and conclusion}
 Zero knowledge proofs are fundamental tools with a wide range of applications. Despite extensive efforts dedicated to optimize zero knowledge proof algorithms, still they incur a lot of computation complexities. 
In this paper we have presented a secure multi party algorithm to delegate the task of prover to several servers, where servers are untrusted and have limited computation resources. We have focused on QAP-based zkSNARKs due to its importance in practice, however a similar approach can be taken to distribute other kinds of zero knowledge proof systems.

\bibliographystyle{ieeetr}
\bibliography{Bibliography}

\begin{thebibliography}{10}

\bibitem{goldwasser1989knowledge}
S.~Goldwasser, S.~Micali, and C.~Rackoff, ``The knowledge complexity of
  interactive proof systems,'' {\em SIAM Journal on computing}, vol.~18, no.~1,
  pp.~186--208, 1989.

\bibitem{babai1985trading}
L.~Babai, ``Trading group theory for randomness,'' in {\em Proceedings of the
  seventeenth annual ACM symposium on Theory of computing}, pp.~421--429, 1985.

\bibitem{kilian1992note}
J.~Kilian, ``A note on efficient zero-knowledge proofs and arguments,'' in {\em
  Proceedings of the twenty-fourth annual ACM symposium on Theory of
  computing}, pp.~723--732, 1992.

\bibitem{micali1994cs}
S.~Micali, ``Cs proofs,'' in {\em Proceedings 35th Annual Symposium on
  Foundations of Computer Science}, pp.~436--453, IEEE, 1994.

\bibitem{li2014survey}
F.~Li and B.~McMillin, ``A survey on zero-knowledge proofs,'' in {\em Advances
  in Computers}, vol.~94, pp.~25--69, Elsevier, 2014.

\bibitem{mohr2007survey}
A.~Mohr, ``A survey of zero-knowledge proofs with applications to
  cryptography,'' {\em Southern Illinois University, Carbondale}, pp.~1--12,
  2007.

\bibitem{setty2013resolving}
S.~Setty, B.~Braun, V.~Vu, A.~J. Blumberg, B.~Parno, and M.~Walfish,
  ``Resolving the conflict between generality and plausibility in verified
  computation,'' in {\em Proceedings of the 8th ACM European Conference on
  Computer Systems}, pp.~71--84, 2013.

\bibitem{bitansky2013succinct}
N.~Bitansky, A.~Chiesa, Y.~Ishai, O.~Paneth, and R.~Ostrovsky, ``Succinct
  non-interactive arguments via linear interactive proofs,'' in {\em Theory of
  Cryptography Conference}, pp.~315--333, Springer, 2013.

\bibitem{gabizon2019plonk}
A.~Gabizon, Z.~J. Williamson, and O.~Ciobotaru, ``Plonk: Permutations over
  lagrange-bases for oecumenical noninteractive arguments of knowledge.,'' {\em
  IACR Cryptol. ePrint Arch.}, vol.~2019, p.~953, 2019.

\bibitem{giacomelli2016zkboo}
I.~Giacomelli, J.~Madsen, and C.~Orlandi, ``Zkboo: Faster zero-knowledge for
  boolean circuits,'' in {\em 25th $\{$usenix$\}$ security symposium
  ($\{$usenix$\}$ security 16)}, pp.~1069--1083, 2016.

\bibitem{maller2019sonic}
M.~Maller, S.~Bowe, M.~Kohlweiss, and S.~Meiklejohn, ``Sonic: Zero-knowledge
  snarks from linear-size universal and updatable structured reference
  strings,'' in {\em Proceedings of the 2019 ACM SIGSAC Conference on Computer
  and Communications Security}, pp.~2111--2128, 2019.

\bibitem{setty2019spartan}
S.~Setty, ``Spartan: Efficient and general-purpose zksnarks without trusted
  setup,'' tech. rep., Cryptology ePrint Archive, Report 2019/550, 2019.

\bibitem{bowe2019halo}
S.~Bowe, J.~Grigg, and D.~Hopwood, ``Halo: Recursive proof composition without
  a trusted setup,'' tech. rep., Cryptology ePrint Archive, Report 2019/1021,
  2019.

\bibitem{wahby2018doubly}
R.~S. Wahby, I.~Tzialla, A.~Shelat, J.~Thaler, and M.~Walfish,
  ``Doubly-efficient zksnarks without trusted setup,'' in {\em 2018 IEEE
  Symposium on Security and Privacy (SP)}, pp.~926--943, IEEE, 2018.

\bibitem{bootle2018arya}
J.~Bootle, A.~Cerulli, J.~Groth, S.~Jakobsen, and M.~Maller, ``Arya: Nearly
  linear-time zero-knowledge proofs for correct program execution,'' in {\em
  International Conference on the Theory and Application of Cryptology and
  Information Security}, pp.~595--626, Springer, 2018.

\bibitem{ben2017scalable}
E.~Ben-Sasson, A.~Chiesa, E.~Tromer, and M.~Virza, ``Scalable zero knowledge
  via cycles of elliptic curves,'' {\em Algorithmica}, vol.~79, no.~4,
  pp.~1102--1160, 2017.

\bibitem{xie2019libra}
T.~Xie, J.~Zhang, Y.~Zhang, C.~Papamanthou, and D.~Song, ``Libra: Succinct
  zero-knowledge proofs with optimal prover computation,'' in {\em Annual
  International Cryptology Conference}, pp.~733--764, Springer, 2019.

\bibitem{ames2017ligero}
S.~Ames, C.~Hazay, Y.~Ishai, and M.~Venkitasubramaniam, ``Ligero: Lightweight
  sublinear arguments without a trusted setup,'' in {\em Proceedings of the
  2017 acm sigsac conference on computer and communications security},
  pp.~2087--2104, 2017.

\bibitem{wahby2017full}
R.~S. Wahby, Y.~Ji, A.~J. Blumberg, A.~Shelat, J.~Thaler, M.~Walfish, and
  T.~Wies, ``Full accounting for verifiable outsourcing,'' in {\em Proceedings
  of the 2017 ACM SIGSAC Conference on Computer and Communications Security},
  pp.~2071--2086, 2017.

\bibitem{bunz2018bulletproofs}
B.~B{\"u}nz, J.~Bootle, D.~Boneh, A.~Poelstra, P.~Wuille, and G.~Maxwell,
  ``Bulletproofs: Short proofs for confidential transactions and more,'' in
  {\em 2018 IEEE Symposium on Security and Privacy (SP)}, pp.~315--334, IEEE,
  2018.

\bibitem{ben2018scalable}
E.~Ben-Sasson, I.~Bentov, Y.~Horesh, and M.~Riabzev, ``Scalable, transparent,
  and post-quantum secure computational integrity.,'' {\em IACR Cryptology
  ePrint Archive}, vol.~2018, p.~46, 2018.

\bibitem{ben2019aurora}
E.~Ben-Sasson, A.~Chiesa, M.~Riabzev, N.~Spooner, M.~Virza, and N.~P. Ward,
  ``Aurora: Transparent succinct arguments for r1cs,'' in {\em Annual
  international conference on the theory and applications of cryptographic
  techniques}, pp.~103--128, Springer, 2019.

\bibitem{kurmi2015survey}
J.~Kurmi and A.~Sodhi, ``A survey of zero-knowledge proof for authentication,''
  {\em International Journal of Advanced Research in Computer Science and
  Software Engineering}, vol.~5, no.~1, 2015.

\bibitem{groth2017snarky}
J.~Groth and M.~Maller, ``Snarky signatures: Minimal signatures of knowledge
  from simulation-extractable snarks,'' in {\em Annual International Cryptology
  Conference}, pp.~581--612, Springer, 2017.

\bibitem{sasson2014zerocash}
E.~B. Sasson, A.~Chiesa, C.~Garman, M.~Green, I.~Miers, E.~Tromer, and
  M.~Virza, ``Zerocash: Decentralized anonymous payments from bitcoin,'' in
  {\em 2014 IEEE Symposium on Security and Privacy}, pp.~459--474, IEEE, 2014.

\bibitem{kosba2016hawk}
A.~Kosba, A.~Miller, E.~Shi, Z.~Wen, and C.~Papamanthou, ``Hawk: The blockchain
  model of cryptography and privacy-preserving smart contracts,'' in {\em 2016
  IEEE symposium on security and privacy (SP)}, pp.~839--858, IEEE, 2016.

\bibitem{juels2016ring}
A.~Juels, A.~Kosba, and E.~Shi, ``The ring of gyges: Investigating the future
  of criminal smart contracts,'' in {\em Proceedings of the 2016 ACM SIGSAC
  Conference on Computer and Communications Security}, pp.~283--295, 2016.

\bibitem{li2020phantom}
X.~Li, Y.~Zheng, K.~Xia, T.~Sun, and J.~Beyler, ``Phantom: An efficient privacy
  protocol using zk-snarks based on smart contracts.,'' {\em IACR Cryptol.
  ePrint Arch.}, vol.~2020, p.~156, 2020.

\bibitem{xu2017enabling}
L.~Xu, N.~Shah, L.~Chen, N.~Diallo, Z.~Gao, Y.~Lu, and W.~Shi, ``Enabling the
  sharing economy: Privacy respecting contract based on public blockchain,'' in
  {\em Proceedings of the ACM Workshop on Blockchain, Cryptocurrencies and
  Contracts}, pp.~15--21, 2017.

\bibitem{lu2018zebralancer}
Y.~Lu, Q.~Tang, and G.~Wang, ``Zebralancer: Private and anonymous crowdsourcing
  system atop open blockchain,'' in {\em 2018 IEEE 38th International
  Conference on Distributed Computing Systems (ICDCS)}, pp.~853--865, IEEE,
  2018.

\bibitem{kerber2019ouroboros}
T.~Kerber, A.~Kiayias, M.~Kohlweiss, and V.~Zikas, ``Ouroboros crypsinous:
  Privacy-preserving proof-of-stake,'' in {\em 2019 IEEE Symposium on Security
  and Privacy (SP)}, pp.~157--174, IEEE, 2019.

\bibitem{garoffolo2020zendoo}
A.~Garoffolo, D.~Kaidalov, and R.~Oliynykov, ``Zendoo: a zk-snark verifiable
  cross-chain transfer protocol enabling decoupled and decentralized
  sidechains,'' {\em arXiv preprint arXiv:2002.01847}, 2020.

\bibitem{shan2018practical}
Z.~Shan, K.~Ren, M.~Blanton, and C.~Wang, ``Practical secure computation
  outsourcing: A survey,'' {\em ACM Computing Surveys (CSUR)}, vol.~51, no.~2,
  pp.~1--40, 2018.

\bibitem{yu2017survey}
X.~Yu, Z.~Yan, and A.~V. Vasilakos, ``A survey of verifiable computation,''
  {\em Mobile Networks and Applications}, vol.~22, no.~3, pp.~438--453, 2017.

\bibitem{destefano2020snnzksnark}
Z.~L. DeStefano, ``Snnzksnark an efficient design and implementation of a
  secure neural network verification system using zksnarks,'' tech. rep., Los
  Alamos National Lab.(LANL), Los Alamos, NM (United States), 2020.

\bibitem{libsnark}
S.~Lab, ``{libsnark: a C++ library for zkSNARK proofs},'' 2017.

\bibitem{bellman}
Z.~knowledge Cryptography~in Rust, ``{bellman},'' 2020.

\bibitem{websnark}
iden3, ``{wasmsnark},'' 2020.

\bibitem{eberhardt2018zokrates}
J.~Eberhardt and S.~Tai, ``Zokrates-scalable privacy-preserving off-chain
  computations,'' in {\em 2018 IEEE International Conference on Internet of
  Things (iThings) and IEEE Green Computing and Communications (GreenCom) and
  IEEE Cyber, Physical and Social Computing (CPSCom) and IEEE Smart Data
  (SmartData)}, pp.~1084--1091, IEEE, 2018.

\bibitem{parno2013pinocchio}
B.~Parno, J.~Howell, C.~Gentry, and M.~Raykova, ``Pinocchio: Nearly practical
  verifiable computation,'' in {\em 2013 IEEE Symposium on Security and
  Privacy}, pp.~238--252, IEEE, 2013.

\bibitem{groth2016size}
J.~Groth, ``On the size of pairing-based non-interactive arguments,'' in {\em
  Annual international conference on the theory and applications of
  cryptographic techniques}, pp.~305--326, Springer, 2016.

\bibitem{covaci2018nectar}
A.~Covaci, S.~Madeo, P.~Motylinski, and S.~Vincent, ``Nectar: non-interactive
  smart contract protocol using blockchain technology,'' in {\em Proceedings of
  the 1st International Workshop on Emerging Trends in Software Engineering for
  Blockchain}, pp.~17--24, 2018.

\bibitem{wu2018dizk}
H.~Wu, W.~Zheng, A.~Chiesa, R.~A. Popa, and I.~Stoica, ``Dizk: A distributed
  zero knowledge proof system,'' in {\em 27th $\{$USENIX$\}$ Security Symposium
  ($\{$USENIX$\}$ Security 18)}, pp.~675--692, 2018.

\bibitem{ephraim2020sparks}
N.~Ephraim, C.~Freitag, I.~Komargodski, and R.~Pass, ``Sparks: Succinct
  parallelizable arguments of knowledge,'' in {\em Annual International
  Conference on the Theory and Applications of Cryptographic Techniques},
  pp.~707--737, Springer, 2020.

\bibitem{schoenmakers2016trinocchio}
B.~Schoenmakers, M.~Veeningen, and N.~de~Vreede, ``Trinocchio:
  privacy-preserving outsourcing by distributed verifiable computation,'' in
  {\em International Conference on Applied Cryptography and Network Security},
  pp.~346--366, Springer, 2016.

\bibitem{ben2019completeness}
M.~Ben-Or, S.~Goldwasser, and A.~Wigderson, ``Completeness theorems for
  non-cryptographic fault-tolerant distributed computation,'' in {\em Providing
  Sound Foundations for Cryptography: On the Work of Shafi Goldwasser and
  Silvio Micali}, pp.~351--371, 2019.

\bibitem{yao1982protocols}
A.~C. Yao, ``Protocols for secure computations,'' in {\em 23rd annual symposium
  on foundations of computer science (sfcs 1982)}, pp.~160--164, IEEE, 1982.

\bibitem{yao1986generate}
A.~C.-C. Yao, ``How to generate and exchange secrets,'' in {\em 27th Annual
  Symposium on Foundations of Computer Science (sfcs 1986)}, pp.~162--167,
  IEEE, 1986.

\bibitem{goldreich2019play}
O.~Goldreich, S.~Micali, and A.~Wigderson, ``How to play any mental game, or a
  completeness theorem for protocols with honest majority,'' in {\em Providing
  Sound Foundations for Cryptography: On the Work of Shafi Goldwasser and
  Silvio Micali}, pp.~307--328, 2019.

\bibitem{crepeau1995committed}
C.~Cr{\'e}peau, J.~van~de Graaf, and A.~Tapp, ``Committed oblivious transfer
  and private multi-party computation,'' in {\em Annual International
  Cryptology Conference}, pp.~110--123, Springer, 1995.

\bibitem{yu2018lagrange}
Q.~Yu, S.~Li, N.~Raviv, S.~M.~M. Kalan, M.~Soltanolkotabi, and S.~Avestimehr,
  ``Lagrange coded computing: Optimal design for resiliency, security and
  privacy,'' {\em arXiv preprint arXiv:1806.00939}, 2018.

\bibitem{evans2017pragmatic}
D.~Evans, V.~Kolesnikov, and M.~Rosulek, ``A pragmatic introduction to secure
  multi-party computation,'' {\em Foundations and Trends{\textregistered} in
  Privacy and Security}, vol.~2, no.~2-3, 2017.

\bibitem{chen2019secure}
V.~Chen, V.~Pastro, and M.~Raykova, ``Secure computation for machine learning
  with spdz,'' {\em arXiv preprint arXiv:1901.00329}, 2019.

\bibitem{gang2008electronic}
C.~Gang, ``An electronic voting scheme based on secure multi-party
  computation,'' in {\em 2008 International Symposium on Computer Science and
  Computational Technology}, vol.~1, pp.~292--294, IEEE, 2008.

\bibitem{archer2018keys}
D.~W. Archer, D.~Bogdanov, Y.~Lindell, L.~Kamm, K.~Nielsen, J.~I. Pagter, N.~P.
  Smart, and R.~N. Wright, ``From keys to databases—real-world applications
  of secure multi-party computation,'' {\em The Computer Journal}, vol.~61,
  no.~12, pp.~1749--1771, 2018.

\bibitem{gao2019bfr}
H.~Gao, Z.~Ma, S.~Luo, and Z.~Wang, ``Bfr-mpc: A blockchain-based fair and
  robust multi-party computation scheme,'' {\em IEEE Access}, vol.~7,
  pp.~110439--110450, 2019.

\bibitem{el2017guide}
N.~El~Mrabet and M.~Joye, {\em Guide to pairing-based cryptography}.
\newblock CRC Press, 2017.

\bibitem{washington2008elliptic}
L.~C. Washington, {\em Elliptic curves: number theory and cryptography}.
\newblock CRC press, 2008.

\bibitem{bowe2018multi}
S.~Bowe, A.~Gabizon, and M.~D. Green, ``A multi-party protocol for constructing
  the public parameters of the pinocchio zk-snark,'' in {\em International
  Conference on Financial Cryptography and Data Security}, pp.~64--77,
  Springer, 2018.

\bibitem{learnFFT}
R.~J. Fateman, ``Lecture notes in computer science,'' 2000.

\bibitem{bai2006strong}
L.~Bai, ``A strong ramp secret sharing scheme using matrix projection,'' in
  {\em 2006 International Symposium on a World of Wireless, Mobile and
  Multimedia Networks (WoWMoM'06)}, pp.~5--pp, IEEE, 2006.

\bibitem{nodehi2021secure}
H.~A. Nodehi and M.~A. Maddah-Ali, ``Secure coded multi-party computation for
  massive matrix operations,'' {\em IEEE Transactions on Information Theory},
  2021.

\bibitem{najarkolaei2020coded}
S.~R.~H. Najarkolaei, M.~A. Maddah-Ali, and M.~R. Aref, ``Coded secure
  multi-party computation for massive matrices with adversarial nodes,'' in
  {\em 2020 Iran Workshop on Communication and Information Theory (IWCIT)},
  pp.~1--6, IEEE, 2020.

\bibitem{gennaro2013quadratic}
R.~Gennaro, C.~Gentry, B.~Parno, and M.~Raykova, ``Quadratic span programs and
  succinct nizks without pcps,'' in {\em Annual International Conference on the
  Theory and Applications of Cryptographic Techniques}, pp.~626--645, Springer,
  2013.

\bibitem{lipmaa2013succinct}
H.~Lipmaa, ``Succinct non-interactive zero knowledge arguments from span
  programs and linear error-correcting codes,'' in {\em International
  Conference on the Theory and Application of Cryptology and Information
  Security}, pp.~41--60, Springer, 2013.

\bibitem{danezis2014square}
G.~Danezis, C.~Fournet, J.~Groth, and M.~Kohlweiss, ``Square span programs with
  applications to succinct nizk arguments,'' in {\em International Conference
  on the Theory and Application of Cryptology and Information Security},
  pp.~532--550, Springer, 2014.

\end{thebibliography}
\begin{appendices}
\clearpage
\section{The multiparty algorithm for computing $\mathbf{\hat{a}} = \mathscr{FFT}_{\mathcal{D}}^{-1}(\mathbf{a})$}
\label{Proposed_Algorithm1}
Suppose that the prover has a large secret vector $\mathbf{a}$ of dimension $n$, and aims to compute $\mathscr{FFT}_{\mathcal{D}}^{-1}(\mathbf{a})$, using a cluster of $N$ semi-honest servers, where up to $T$ of them may collude. Here we propose Algorithm \ref{fft1_MPC} in which the computation complexity of each server is equal to $O(\frac{n}{K} \log (\frac{n}{K}))$, and the computation complexity of the prover is equal to $O(Kn)$.
\begin{algorithm}[!htbp]
\setstretch{1.75}
\caption{Multi party algorithm for computing $\mathbf{\hat{a}} = \mathscr{FFT}_{\mathcal{D}}^{-1}(\mathbf{a})$}
\label{fft1_MPC}
\begin{algorithmic}[1]
\Statex
\textbf{Input}: vector $\mathbf{a}=\left [a_i \right ]_{i=0}^{n-1}$ of the length $n$.
\Statex
Prover does the following steps:
\State \hskip1em 
partitions $\mathbf{a}$ into $K$ vectors $\mathbf{a}^{(j)}=\left [a_{Kt+j-1} \right ]_{t=0}^{\frac{n}{K}-1}$ for $j \in \{1,\dots,K\}$.
\State \hskip1em 
picks $T$ vectors $\mathbf{v}_j \in \mathbb{F}$, $j \in \{K+1,\dots,K+T\}$, independently and uniformly at random from ${\frac{n}{K}}$.
\State \hskip1em 
forms $\mathbf{u}(z)$ according to (\ref{FFT_first_algorthm_u}), and sends $\mathbf{u}(\alpha_\theta)$ to Server $\theta$, where $\theta \in \{1,\dots,N\}$.
\Statex
Server $\theta$ does the followings:
\State \hskip1em
computes $\mathscr{FFT}_{\mathcal{S'}}^{-1}(\mathbf{u}(\alpha_\theta))$ for $\mathcal{S'} = \left \{1, \omega^K, \dots, \omega^{K \left (\frac{n}{K}-1 \right)} \right \}$, and sends it to the prover.
\Statex
Upon receiving $\mathscr{FFT}_{\mathcal{S'}}^{-1}(\mathbf{u}(\alpha_\gamma))$, $\gamma \in \{1,\dots, N \}$, the prover does the following steps:
\State \hskip 1em 
forms $\mathbf{u}''(z)$ as defined in (\ref{One_FFT_decode}), and evaluates $\mathbf{u}''(\beta_j)$ for $j \in \{1, \dots, K\}$.
\State \hskip1em
calculates the vector $\mathbf{\hat{a}} =\left [\hat{a}_i \right ]_{i=0}^{n-1}$ as $\hat{a}_i = \frac{K\eta^{-i}}{n} \sum_{j = 1}^{K} \omega^{-i(j-1)} \left (\mathbf{u}''(\beta_j) \right )_{i \mod \frac{n}{K}}$.
\Statex
\textbf{Output}: $\mathbf{\hat{a}}$
\end{algorithmic}
\end{algorithm}

Let define the polynomial $\mathbf{u}''(z)$ as,
\begin{align} \label{One_FFT_decode}
\mathbf{u}''(z) \triangleq \sum_{\gamma = 1}^{N} \mathscr{FFT}_{\mathcal{S'}}^{-1}(\mathbf{u}(\alpha_\gamma)) \prod_{k = 1, k \neq \gamma}^{N} \frac{z - \alpha_{k}}{\alpha_{\gamma} - \alpha_{k}},
\end{align}
which is of degree $N-1 = K + T -1$. In step 5 of Algorithm \ref{fft1_MPC}, the prover needs to evaluate it in $K$ different points $\beta_j$ for $j \in \{1,\dots,K \}$. 

\end{appendices}
\end{document}